\setlist[itemize]{wide,labelwidth=!,labelindent=0pt}
\setlist[enumerate]{wide,labelwidth=!,labelindent=0pt}
\newcommand{\bbr}{\mathbb{R}}
\newcommand{\E}{\mathbb{E}}
\newcommand{\W}{\mathcal{W}}
\newcommand{\bbn}{\mathbb{N}}
\renewcommand{\P}{\mathbb{P}}
\newcommand{\bbq}{\mathbb{Q}}
\newcommand{\Q}{\bbq}
\newcommand{\bbt}{\mathbb{T}}
\newcommand{\fcal}{\mathcal{F}}
\newcommand{\gcal}{\mathcal{G}}
\newcommand{\dcal}{\mathcal{D}}
\newcommand{\rcal}{\mathcal{R}}
\newcommand{\acal}{\mathcal{A}}
\newcommand{\ucal}{\mathcal{U}}
\newcommand{\M}{{\cal M}}
\newcommand{\D}{{\cal D}}
\newcounter{modcount}
\newcommand{\modulo}[2]{%
\setcounter{modcount}{#1}\relax
\ifnum\value{modcount}<#2\relax
\else\relax
\addtocounter{modcount}{-#2}\relax
\modulo{\value{modcount}}{#2}\relax
\fi}
\newcommand{\tablepictures}[4][c]{\begin{tabular}[#1]{@{}c@{}}#2\vspace{0.5cm}\\(\alph{#4}) #3\end{tabular}}
\newcounter{gridsearch}
\newcommand{\tabpic}[2]{
    \stepcounter{gridsearch}
    \modulo{\thegridsearch}{2}
    \ifnum\value{modcount}=0
        \tablepictures[t]{#1}{#2}{gridsearch}\\[2.0cm]
    \else
        \tablepictures[t]{#1}{#2}{gridsearch}&~&
    \fi
}
\newtheorem{lemma}{Lemma}[section]
\newtheorem{proposition}[lemma]{Proposition}
\newtheorem{theorem}[lemma]{Theorem}
\newtheorem{corollary}[lemma]{Corollary}
\newtheorem{definition}[lemma]{Definition}
\newtheorem{example1}[lemma]{Example}
\newtheorem{rem1}[lemma]{Remark}
\newtheorem{alg1}[lemma]{Algorithm}
\newtheorem{me1}[lemma]{Mechanism}
\newenvironment{remark}{\begin{rem1}\rm}{\end{rem1}}
\newcommand{\T}{\top}
\newcommand{\diag}{\operatorname{diag}}
\DeclareMathOperator*{\essinf}{ess\,inf}
\DeclareMathOperator*{\esssup}{ess\,sup}
\newcommand\indn[1]{\mathbb{I}_{#1}}
\newcommand\ind[1]{\indn{\{#1\}}}
\newcommand\sbullet{\scalebox{0.5}{$\bullet$}}
\begin{document}
\title{Set-Valued Dynamic Risk Measures for Processes and Vectors}
\author{
Yanhong Chen \thanks{Hunan University, College of Finance and Statistics, Changsha, China 410082. Research is partially supported by National Natural Science Foundation of China (No.\ 11901184) and Natural Science Foundation of Hunan Province (No.\ 2020JJ5025).}
\and
Zachary Feinstein \thanks{Stevens Institute of Technology, School of Business, Hoboken, NJ 07030, USA. \tt{zfeinste@stevens.edu}}}
\date{\today}
\maketitle
\begin{abstract}
The relationship between set-valued risk measures for processes and vectors on the optional filtration is investigated.
The equivalence of risk measures for processes and vectors and the equivalence of their penalty function formulations are provided.
In contrast with scalar risk measures, this equivalence requires an augmentation of the set-valued risk measures for processes.
We utilize this result to deduce a new dual representation for risk measures for processes in the set-valued framework.
Finally, the equivalence of multiportfolio time consistency between set-valued risk measures for processes and vectors is provided; to accomplish this, an augmented definition for multiportfolio time consistency of set-valued risk measures for processes is proposed.
\end{abstract}

\section{Introduction}\label{sec:intro}

Coherent risk measures for multi-period models were introduced by Artzner et al.~\cite{AD02,AD07},
where time consistency (i.e., risk preferences are consistent over time) of risk measures was shown to be equivalent to Bellman's principle (i.e., the risk of a portfolio is equivalent to the iterated risks backwards in time).
These risk measures allow for the risk of cash flows to be quantified in an axiomatic framework.  Though these risk measures appear distinct from the more traditional risk measures of, e.g., Artzner et al.~\cite{AD97,AD99}, F\"{o}llmer and Schied~\cite{FS02} and Frittelli and Rosazza Gianin~\cite{FG02}, risk measures applied to processes can be proven to be equivalent to these more traditional risk measures via the optional filtration as demonstrated in, e.g., Acciaio et al.~\cite{acciaio2012risk}.

Within this work, we are concerned with dynamic risk measures.  That is, we consider those risk measures which allow for updates to the minimal capital requirement over time due to revealed information as encoded in a filtration.
First, there are many studies about dynamic risk measures for random variables which describe financial positions, for instance:
Bion-Nadal~\cite{BN04} studied the dual or robust representation;
Bion-Nadal~\cite{BN08,BN09} gave the equivalent characterization of time consistency of dynamic risk measures for random variables by a cocycle condition on the minimal penalty function (i.e., such that the penalty functions can be decomposed as a summation over time);
Barrieu and El Karoui~\cite{BE-09} studied some applications of static and dynamic risk measures in the aspect of pricing, hedging and designing derivative; and
Delbaen et al.~\cite{DPRG10} studied the representation of the penalty function of dynamic risk measures for random variables induced by backward stochastic differential equations.
There also exist many studies about dynamic risk measures applied to cash flows, for example:
Riedel~\cite{R04} studied the dual representation for time consistent dynamic coherent risk measures for processes with real values;
Cheridito et al.~\cite{CDK04,CDK05} studied dynamic risk measures on the space of all bounded and unbounded c\`{a}dl\`{a}g processes, respectively, with real values or generalized real values;
Frittelli and Scandolo~\cite{FS06} proposed dynamic risk measures for processes from a perspective of acceptance sets;
Cheridito et al.~\cite{CDK06} studied dynamic risk measures for bounded discrete-time processes with random variables as values,
and provided a dual representation and an equivalent characterization of time consistency in terms of the additivity property of the corresponding acceptance sets; and
Acciaio et al.~\cite{acciaio2012risk} provided supermartingale characterisation (i.e., that the sum of a risk measure and its minimal penalty function form a supermartingale w.r.t.\ the dual probability measure) of time consistency of dynamic risk measures for processes.
As highlighted in~\cite{acciaio2012risk}, as with the static risk measures, these dynamic risk measures for processes can be found to be equivalent to risk measures over random variables through the use of the optional filtration.  Utilizing the optional $\sigma$-algebra, the equivalence of time consistency for traditional dynamic risk measures and those for processes are also found to be equivalent.

Set-valued risk measures were introduced in Jouini et al.~\cite{JMT04}, Hamel~\cite{H09}, Hamel et al.~\cite{HHR10} and extended to the dynamic framework in Feinstein and Rudloff~\cite{FR12,FR12b}.  All of those works present risk measures for random vectors.  Set-valued risk measures were introduced so as to consider risks in markets with frictions. Such risk measures have, more recently, been utilized for quantifying systemic risk in Feinstein et al.~\cite{FRW15}
and Ararat and Rudloff~\cite{ararat2020dual}.
In Chen and Hu~\cite{CH17,CH20}, set-valued risk measures for processes were introduced.  However, as demonstrated in those works, the equivalence of risk measures for processes and risk measures for random vectors no longer holds except under certain strong assumptions.  These strong assumptions limit the direct application of known results for, e.g., multiportfolio time consistency (i.e., a version of Bellman's principle for set-valued risk measures) as proven in Feinstein and Rudloff~\cite{FR15-supermtg,FR18-tc} to set-valued risk measures for processes.

In this work we prove a new relation between set-valued risk measures for processes and those for random vectors (on the optional filtration).
Notably, the formulation for equivalence indicates that risk measures for processes at time $t$ need to be augmented to capture the risks prior to time $t$ due to the non-existence of a unique ``0'' element in the set-valued framework; such an augmentation is not necessary in the scalar case as in, e.g.,~\cite{acciaio2012risk}.
By utilizing this new equivalence relation, we are able to derive a novel dual representation for set-valued risk measures for processes akin to that done in~\cite{acciaio2012risk}.
Additionally, we extend these results to present the equivalent formulation for multiportfolio time consistency; since the equivalence between risk measure settings requires an augmentation, we present a new definition for time consistency for risk measures for processes.

The organization of this paper is as follows.  In Section~\ref{sec:rm}, we present preliminary and background information on the set-valued risk measures of interest within this work, i.e., as functions of stochastic processes and random vectors.  Section~\ref{sec:equiv} demonstrates the equivalence of these risk measures both in primal and dual representations.  With the focus on dynamic risk measures, the equivalence of multiportfolio time consistency for processes and vectors is presented in Section~\ref{sec:mptc}.  Section~\ref{sec:conclusion} concludes.

\section{Set-valued risk measures}\label{sec:rm}
In this section we will summarize the definitions for set-valued risk measures.
We will provide details of the filtrations and the spaces of claims of interest within this work in Section~\ref{sec:rm-prelim}, and give the necessary background on risk measures for processes as defined in, e.g.,~\cite{CH17,CH20}, and for random vectors as defined in, e.g.,~\cite{FR12,FR12b} in Section~\ref{sec:rm-process} and Section~\ref{sec:rm-vector}, respectively.

\subsection{Background and notation}\label{sec:rm-prelim}

Fix a finite time horizon $T \in \bbn:=\{1, 2, \dots\}$ and denote ${\bbt}:=\{0, 1, \dots, T\}$ and
 ${\bbt_t}:=\{s \in \bbt  \; : \; s \geq t\}$ for $t \in \bbt$.
Furthermore, define the discrete interval $[t, s) := \{t, t+1, \ldots, s-1\}$ for any $s, t \in {\bbt}$ with $s > t$.
Fix a filtered probability space $(\Omega, \fcal, (\fcal_t)_{t \in \bbt}, \P)$
with $\fcal_0$ the complete and trivial $\sigma$-algebra.  Without loss of generality, take $\fcal=\fcal_T$.
Let $d \geq 1$ be the number of assets under consideration and $|\cdot|$ be an arbitrary but fixed norm on $\bbr^d$.
Let $L_t^0(\bbr^d) := L^0(\Omega, \fcal_t, \P; \bbr^d)$ (with $L^0(\bbr^d) := L_T^0(\bbr^d)$) denote the linear space of the equivalence classes of $\fcal_t$-measurable vector-valued functions, where we specify random vectors $\P$-a.s.
Define the space of (equivalence classes of) $p$-integrable random $d$-vectors for $p \in \{1,\infty\}$ by $L_t^p(\bbr^d) :=L^p(\Omega, \fcal_t, \P; \bbr^d) \subseteq L_t^0(\bbr^d)$ (with $L^p(\bbr^d) := L_T^p(\bbr^d)$).  In this way, $L_t^p(\bbr^d)$ denotes the linear space of the equivalence classes of $\fcal_t$-measurable functions $X: \Omega \to \bbr^d$ with bounded $p$ norm where these norms are given by
\begin{align*}
\|X\|_p = \begin{cases} \E^\P[|X|]=\int\limits_{\Omega} |X(\omega)| d \P &\text{if } p=1 \\ \esssup_{\omega \in \Omega}|X(\omega)| &\text{if } p=\infty.\end{cases}
\end{align*}
In this paper, we denote the expectation $\E^\P[X]$ by $\E[X]$, and denote the conditional expectation $\E^\P[X | \fcal_t]$ by $\E_t[X]$.
Throughout this work,  we consider the weak* topology on $L_t^\infty(\bbr^d)$ such that the dual space of $L_t^\infty(\bbr^d)$ is $L_t^1(\bbr^d)$.

Fix $m \in \{1, \dots, d\}$ of the assets to be eligible for covering the risk of a portfolio.
Denote by $M := \bbr^{m} \times \{0\}^{d-m}$ the subspace of eligible assets (those assets which can be used to satisfy capital requirements, e.g., US dollars and Euros).
For any measurable set $B \subseteq \bbr^d$,
write $L_t^{p}(B):=\{X \in L_t^{p}(\bbr^d) \; : \; X \in B \, \P\mbox{-a.s.}\}$ for those random vectors that take $\P$-a.s.\ values in
 $B$ for $p \in \{0,1,\infty\}$.
In particular, we denote the closed convex cone of uniformly bounded $\bbr^d$-valued $\fcal_t$-measurable random vectors with $\P$-a.s.\ non-negative components by $L_t^\infty(\bbr^d_+)$, and let
$L_t^{\infty}(\bbr^d_{++})$ be the $\fcal_t$-measurable random vectors with $\P$-a.s.\ strictly positive components.
Additionally, we denote $M_t:= L_t^{\infty}(M)$ to be the space of time $t$ measurable eligible portfolios.  Throughout we consider the non-negative eligible portfolios $M_{t, +}:= M_t \cap L_t^{\infty}(\bbr^d_+)$,
the positive dual cone $M_{t, +}^{\ast}:=\{u \in L_t^1(\bbr^d) \; :  \; \E[u^\T m] \geq 0 \text{ for any } m \in M_{t, +}\}$ and
the perpendicular space $M_{t, +}^{\perp}:=\{u \in L_t^1(\bbr^d) \;  : \; \E[u^\T m] = 0 \text{ for any } m \in M_{t, +}\}$.
Generally, we will define $C^{\ast}:=\{u \in L_t^1(\bbr^d) \; : \; \E[u^\T m] \geq 0 \text{ for any } m \in C\}$ for any convex cone $C \subseteq L_t^{\infty}(\bbr^d)$.
(In)equalities between random vectors, stochastic processes
and between sets are always understood componentwise in the $\P$-a.s.\ sense, unless stated otherwise.
The multiplication between a random variable $\lambda \in L_t^\infty(\bbr)$ and a set of random vectors
$B_t \subseteq L_t^{\infty}(\bbr^d)$ is understood as in the elementwise sense, i.e.,
$\lambda B_t :=\{\lambda X \; : \; X \in B_t\}$ with $(\lambda X)(\omega)=\lambda(\omega) X(\omega)$.
Given a set $A \subseteq M_t$, $\mbox{cl}(A)$ means the closure of $A$ w.r.t.\ the subspace topology on $M_t$
and $\mbox{co}(A)$ means the convex hull of $A$.
Denote the spaces of upper and closed-convex upper sets respectively by
\begin{align*}
\ucal(M_t;M_{t,+}) &:= \{D \subseteq M_t \;  :  \; D=D+ M_{t, +}\}\\
\gcal(M_t;M_{t,+}) &:= \{D \subseteq M_t \; : \; D=\mbox{cl co}(D+ M_{t,+})\},
\end{align*}
where the $+$ sign denotes the usual Minkowski addition. 
We will denote $\Gamma_t(w)$ with $w \in L_t^1(\bbr^d) \setminus \{0\}$ by
$\Gamma_t(w) = \{u \in L_t^{\infty}(\bbr^d) \;  :  \; w^\T u \geq 0\}$.
As will be utilized in the dual representations presented below,
the Minkowski subtraction for sets $A, B \subseteq M_t$ is defined as
\begin{align*}
A-^{\sbullet} B=\{m \in M_t \; : \; B+\{m\} \subseteq A\}.
\end{align*}

The indicator function for some $D \in \fcal$ is denoted by $\indn{D}: \Omega \rightarrow \{0, 1\}$ and defined as
\begin{align*}
\indn{D}(\omega) =\begin{cases}
1\quad \text{if } \omega \in D\\
0 \quad \text{if } \omega\notin D.
\end{cases}
\end{align*}

\indent Define $\rcal^{\infty, d}$ the space of all $d$-dimensional adapted stochastic processes
$X :=(X_t)_{t \in \bbt}$ on $(\Omega, \fcal, (\fcal_t)_{t \in \bbt}, \P)$ whose coordinates are in $L^\infty(\bbr^d)$ uniformly over $t$.
$\acal^{1, d}$ denotes the space of all $d$-dimensional adapted stochastic processes $a$ on $(\Omega, \fcal, (\fcal_t)_{t \in \bbt}, \P)$ with
$\sum_{i=1}^d \E[\sum_{t\in \bbt} | \Delta
 a_{t, i} |] < +\infty$, where $a_{-1, i}:=0$ and $\Delta a_{t, i}:=a_{t, i} -a_{t-1, i}$.
Denote by $\indn{\bbt_t}$ the element of $\rcal^{\infty, d}$ with the $j$-th component equal to 0 for $j =0, \dots, t-1$ and 1 for $j \in \bbt_t$
and denote by $\ind{s}$ the element of  $\rcal^{\infty, d}$ where only the $s$-th component is equal to 1 and all other are equal to 0.
Similarly, for any $t \in \bbt$, denote $\ind{s\in[0,t)}$ the indicator function of $[0,t)$, i.e., $\ind{s\in[0,t)}$ equals 1 if $s \in [0, t)$ and 0 if $s \notin [0, t)$.
As is standard in the literature (see, e.g.,~\cite{CDK06}), for any time $0 \leq r \leq s \leq T$,
we define the projection $\pi_{r, s}(X)_t:=\indn{\bbt_r}X_{t \wedge s}, t \in \bbt$ for any $d$-dimensional adapted stochastic process $X$.
Denote $\rcal_t^{\infty, d}:=\pi_{t,T}(\rcal^{\infty, d})$  and $\acal^{1, d}_t:= \pi_{t, T}(\acal^{1, d})$.
The space of non-negative processes is written as $\rcal^{\infty, d}_+ :=\{X \in \rcal^{\infty, d} \;  : \; X_{t, i} \geq 0 \text{ for any } t\in \bbt, i=1, \dots, d\}$.
As in~\cite{acciaio2012risk,CDK06}, a process $X \in \rcal_t^{\infty, d}$ describes the evolution of a financial value 
or the cumulative cash flow on the time $\bbt_t$.
Throughout this work we will consider the coarsest topologies
(denoted by $\sigma(\rcal_t^{\infty,d}, \acal_t^{1,d})$,  $\sigma(\acal_t^{1,d}, \rcal_t^{\infty,d})$ respectively) such that $\rcal_t^{\infty,d}$ and $\acal_t^{1,d}$ form a dual pair (and vice versa),
i.e., the space of all continuous linear functionals on the topological space $(\rcal_t^{\infty,d}, \sigma(\rcal_t^{\infty,d}, \acal_t^{1,d}))$ (resp. $(\acal_t^{1,d}$, $\sigma(\acal_t^{1,d}, \rcal_t^{\infty,d}))$ is $\acal_t^{1,d}$ (resp. $\rcal_t^{\infty,d}$).

Denote by $\M(\P)$ the set of probability measures $\Q$ on $(\Omega, \fcal)$
which are absolutely continuous with respect to $\P$.
Denote those probability measures equal to $\P$ on $\fcal_t$ by $\M_t(\P) := \{\Q \in \M(\P) \; : \; \Q = \P \text{ on } \fcal_t\}$.
For any given $t \leq s \in \bbt$, define
\begin{align*}
\D_{t, s} :=\left\{\xi\in L_s^\infty(\bbr_+) \; : \; \E_t[\xi]=1 \right\}.
\end{align*}
Then every $\xi \in \D_{t, s}$ defines a probability measure $\Q^{\xi}$ in $\M_t(\P)$ with density $d \Q^\xi / d \P =\xi$.
Conversely, every $\Q \in \M(\P)$ induces a collection of non-negative random variables $\xi_{t, s}(\Q) \in \D_{t,s}$ for $t \leq s \in \bbt$ where $\xi_{t,s}(\Q)$ is defined as
\begin{align*}
\xi_{t, s}(\Q):=
\begin{cases}\frac{\E_s[\frac{d \Q}{d \P}]}{\E_t[\frac{d \Q}{d \P}]} &\text{on } \{\E_t[\frac{d \Q}{d \P}] > 0\} \\ 1 &\text{otherwise.} \end{cases}
\end{align*}
As in, e.g., Cheridito and Kupper~\cite{CK10}, we will use a $\P$-almost sure version of the $\Q$-conditional expectation of $X \in L^{\infty}(\bbr^d)$ given by
\begin{align*}
\E^{\Q}_t[X]:=\E^{\Q}[X | \fcal_t]:=\E_t[\xi_{t, T}(\Q) X]
\end{align*}
for any time $t \in \bbt$.
For $\Q \in \M(\P)^d$, we will utilize the vector representation $\xi_{t, s}(\Q):=(\xi_{t, s}(\Q_1), \dots, \xi_{t, s}(\Q_d))^\T$ for any time $t \leq s \in \bbt$.
We further define the function $w_t^s: \M(\P)^d \times L_t^1(\bbr^d) \to L_s^0(\bbr^d)$ for any $t, s \in \bbt$ with $t < s$ by
\begin{align*}
w_t^s(\Q, w) &:= \diag(w) \xi_{t, s}(\Q),
\end{align*}
for any $\Q \in \M(\P)^d$ and $w \in L_t^1(\bbr^d)$, where $\diag(w)$ denotes the diagonal matrix with the components of $w$ on the main diagonal.
By convention, for any $t \in \bbt$, $w_t^t(\Q,w)=w$ for any $\Q \in \M(\P)^d$ and  $w \in L_t^1(\bbr^d)$.

As we seek to relate risk measures for random vectors to those of processes, as in~\cite{acciaio2012risk}, we consider the optional filtration, i.e., $(\bar{\Omega},\bar\fcal,(\bar\fcal_t)_{t\in\bbt},\bar\P)$ with sample space $\bar{\Omega}=\Omega \times \bbt$, $\sigma$-algebra $\bar\fcal=\sigma(\{A_t \times \{t\} \; : \;  A_t \in \fcal_t, t \in \bbt\})$,
filtration $\bar\fcal_t=\sigma(\{D_r \times \{r\}, D_t \times \bbt_t \; : \;  D_r \in \fcal_r, r < t, D_t \in \fcal_t \})$, $t \in \bbt$
and probability measure $\bar\P =\P \otimes \mu$ where $\mu=(\mu_t)_{t \in \bbt}$ is some adapted reference process such that $\sum_{t \in \bbt}\mu_t=1$ and $\P[\min_{t \in \bbt} \mu_t > \epsilon] = 1$ for some $\epsilon > 0$.
Under this measure on the optional $\sigma$-algebra, the expectation is
$\bar\E[X]:=\bar\E^{\bar\P}[X]:= \E[\sum_{t \in \bbt} X_t \mu_t]$
for any measurable function $X$ on $(\bar{\Omega}, \bar\fcal)$; we emphasize that this expectation is taken over the optional $\sigma$-algebra by utilizing the $\bar\E$ notation throughout this work.
A random vector $X=(X_t)_{t \in \bbt}$ on $(\bar{\Omega}, \bar\fcal, \bar\P)$ is
$\bar\fcal_t$-measurable if and only if $X_r$ is $\fcal_r$-measurable for all $r=0, \dots, t$
and $X_r=X_t$ for any $r > t$.
Denote $\bar L^p_t(\bbr^d) := L^p(\bar{\Omega},\bar\fcal_t,\bar\P;\bbr^d)$ (with $\bar L^p(\bbr^d) := \bar L^p_T(\bbr^d)$) for $p\in\{0, 1, \infty\}$; similarly denote the set of $\bar\fcal_t$-measurable random vectors taking value in the measurable set $B \subseteq \bbr^d$,
by $\bar L^p_t(B) \subseteq \bar L^p_t(\bbr^d)$ for $p \in\{0,1,\infty\}$.
Often we consider the non-negative orthant $B = \bbr^d_+$ or the positive orthant $B = \bbr^d_{++}$.
Denote $\bar M_t = \bar L_t^\infty(M) := \{X \in \bar L_t^\infty(\bbr^d) \; : \; X \in M \, \bar\P\mbox{-a.s.}\}$.
Throughout this work,  we consider the weak* topology on $\bar L_t^\infty(\bbr^d)$ such that the dual space of $\bar L_t^\infty(\bbr^d)$ is $\bar L_t^1(\bbr^d)$.
As on the original probability space, denote by $\bar\M(\bar\P)$ the set of probability measures $\bar\Q$ on $(\bar{\Omega}, \bar\fcal)$ which are absolutely continuous with respect to $\bar\P$.
Denote by $\bar\E^{\bar\Q}[X]$ as the expectation of $X \in \bar L^\infty(\bbr^d)$ with respect to the probability measure $\bar\Q$.
We will, further, denote $\bar \Gamma_t(\bar w)$ with $\bar w \in \bar L_t^1(\bbr^d) \setminus \{0\}$ by
$\bar \Gamma_t(\bar w) = \{u \in \bar L_t^{\infty}(\bbr^d) \;  :  \; \bar w^\T u \geq 0 \;  \bar\P\mbox{-a.s.}\}$.
The following Theorem \ref{thm:barQ}, adapted from~\cite{acciaio2012risk}, provides a representation for these probability measures on the optional $\sigma$-algebra.

\begin{theorem}
\label{thm:barQ}
For any probability measure $\bar\Q \in \bar\M(\bar\P)$ on the optional $\sigma$-algebra, there exist a probability measure $\Q \in \M(\P)$ and an optional random measure $\psi \in \Psi(\P) := \{\psi \in \rcal_+^{\infty, 1} \; : \; \sum_{t \in \bbt} \psi_t = 1 \; \P\mbox{-a.s.}\}$ such that
\begin{equation}\label{eq:barQ-decomp}
\bar\E^{\bar\Q}[X] = \E^\Q\left[\sum_{t \in \bbt} \psi_t X_t\right]
\end{equation}
for any $X \in \bar L^{\infty}(\bbr)$.

Conversely, any $\Q \in \M(\P)$ and $\psi \in \Psi(\P)$ define a probability measure $\bar\Q := \Q \otimes \psi \in \bar\M(\bar\P)$ such that \eqref{eq:barQ-decomp} holds.
\end{theorem}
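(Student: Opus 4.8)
The plan is to treat the two directions separately, beginning with the easier converse and then building the representation. For the converse, given $\Q \in \M(\P)$ and $\psi \in \Psi(\P)$, I would define $\bar\Q$ through the prescription $\bar\E^{\bar\Q}[X] := \E^\Q[\sum_{t \in \bbt} \psi_t X_t]$ for $X \in \bar L^\infty(\bbr)$ and verify that this determines a measure in $\bar\M(\bar\P)$: taking $X_t = 1$ for all $t$ gives total mass $\E^\Q[\sum_{t}\psi_t] = 1$ since $\sum_{t}\psi_t = 1$ $\P$-a.s.; positivity is immediate because $X \geq 0$ forces $\sum_t \psi_t X_t \geq 0$; and absolute continuity follows since any $\bar A = \bigcup_{t \in \bbt}(A_t \times \{t\}) \in \bar\fcal$ (with $A_t \in \fcal_t$) satisfying $\bar\P(\bar A) = \E[\sum_t \indn{A_t}\mu_t] = 0$ must have $\P(A_t) = 0$ for every $t$ — this is exactly where the uniform lower bound $\mu_t > \epsilon$ enters — whence $\Q(A_t) = 0$ and $\bar\Q(\bar A) = \E^\Q[\sum_t \psi_t \indn{A_t}] = 0$. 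Equation \eqref{eq:barQ-decomp} then holds by construction.

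For the forward representation, I would start from the Radon--Nikodym density $\bar D := d\bar\Q/d\bar\P \in \bar L^1(\bbr_+)$. Since $\bar D$ is $\bar\fcal$-measurable with $\bar\E[\bar D] = 1$, the measurability characterization recorded above identifies it with an adapted process $(D_t)_{t \in \bbt}$, where each $D_t \geq 0$ is $\fcal_t$-measurable and $\E[\sum_t \mu_t D_t] = 1$. Writing $g_t := \mu_t D_t$, so that $\bar\E^{\bar\Q}[X] = \bar\E[\bar D X] = \E[\sum_t g_t X_t]$ for every $X \in \bar L^\infty(\bbr)$, the task reduces to producing $\Q \in \M(\P)$ and $\psi \in \Psi(\P)$ obeying the pointwise factorization $\zeta_t \psi_t = g_t$ for all $t$, where $\zeta_t := \E_t[d\Q/d\P]$. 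Indeed, because $\psi_t X_t$ is $\fcal_t$-measurable, one has $\E^\Q[\sum_t \psi_t X_t] = \sum_t \E[\E_t[d\Q/d\P]\,\psi_t X_t] = \E[\sum_t \zeta_t \psi_t X_t]$, which equals $\E[\sum_t g_t X_t] = \bar\E^{\bar\Q}[X]$ precisely under that factorization.

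The delicate point — and the step I expect to be the main obstacle — is that the naive disintegration $\psi_t = g_t / \sum_s g_s$ fails to be $\fcal_t$-measurable, so the adaptedness demanded by $\psi \in \Psi(\P)$ forces a genuinely dynamic, hazard-rate style construction rather than a static normalization. Following the scalar argument of \cite{acciaio2012risk}, I would introduce the remaining-mass process $R_t := \E_t[\sum_{s \in \bbt_t} g_s]$, which is $\fcal_t$-measurable and nonnegative with $R_0 = 1$ and the backward relation $R_t = g_t + \E_t[R_{t+1}]$. Then define a survival sequence by $S_0 := 1$ and $S_{t+1} := S_t\,\E_t[R_{t+1}]/R_t$ (setting the ratio to $0$ on $\{R_t = 0\}$), put $\psi_t := S_t - S_{t+1}$, and take the candidate density martingale $\zeta_t := R_t/S_t$ on $\{S_t > 0\}$ (and $\zeta_t := 1$ otherwise) with $d\Q/d\P := \zeta_T$.

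The verification then splits into three short checks. First, $S$ is $[0,1]$-valued and nonincreasing with $S_{T+1} = 0$, so each $\psi_t \geq 0$ is $\fcal_t$-measurable and $\sum_t \psi_t = S_0 - S_{T+1} = 1$, giving $\psi \in \Psi(\P)$. Second, $\zeta$ is a nonnegative $\P$-martingale with $\zeta_0 = 1$ — the identity $\E_t[\zeta_{t+1}] = \zeta_t$ follows from the $\fcal_t$-measurability of $S_{t+1}$ — so $\Q \in \M(\P)$. Third, the factorization holds, since $\zeta_t \psi_t = \tfrac{R_t}{S_t}(S_t - S_{t+1}) = R_t - \E_t[R_{t+1}] = g_t$. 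Feeding this into the chain of equalities above yields \eqref{eq:barQ-decomp}. The only genuine care required is the bookkeeping on the $\P$-null events $\{R_t = 0\}$ and $\{S_t = 0\}$, where $g_s = 0$ for $s \geq t$ and the adopted conventions are harmless; I would relegate these to a remark rather than interrupt the main computation.
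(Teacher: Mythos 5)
Your overall route is sound and, in fact, more informative than the paper's own treatment: the paper proves Theorem~\ref{thm:barQ} by citing Theorem 3.4 and Remark B.1 of Acciaio et al., whereas you reconstruct the decomposition from scratch. The converse direction is correct, and in the forward direction the reduction to the factorization $\E_t[\tfrac{d\Q}{d\P}]\,\psi_t = g_t$ with $g_t = \mu_t D_t$, together with the hazard-rate construction via $R_t = \E_t[\sum_{s \in \bbt_t} g_s]$ and the survival process $S$, is exactly the right idea.

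There is, however, a genuine error in the degenerate-event bookkeeping, and it is not confined to $\P$-null sets as you assert. The event $\{S_{t+1} = 0,\ S_t > 0\}$ has positive probability whenever $\bar\Q$ assigns zero conditional mass to all times after $t$ on a non-null $\fcal_t$-event (e.g.\ $\bar\Q$ concentrated on early times on part of $\Omega$); on that event $\E_t[R_{t+1}] = 0$ while $\zeta_t = R_t/S_t = g_t/S_t$ need not equal $1$. Your convention $\zeta_{t+1} := 1$ on $\{S_{t+1} = 0\}$ therefore destroys the martingale property: $\E_t[\zeta_{t+1}] = 1 \ne \zeta_t$ there, so $\E[\zeta_T]$ need not equal $1$ (your $\Q$ need not be a probability measure), and $\E_s[\zeta_T]\,\psi_s = g_s$ can fail at times $s \le t$ --- which is precisely the identity your final chain of equalities relies on. The repair is small but necessary: since $\{S_{t+1} = 0\} \in \fcal_t$, define $\zeta$ recursively by $\zeta_{t+1} := R_{t+1}/S_{t+1}$ on $\{S_{t+1} > 0\}$ and $\zeta_{t+1} := \zeta_t$ on $\{S_{t+1} = 0\}$, i.e.\ stop the density process once $S$ is absorbed at $0$. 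Then $\zeta$ is a true nonnegative martingale with $\zeta_0 = 1$, one still has $\zeta_t = R_t/S_t$ on $\{S_t > 0\}$, and on $\{S_t = 0\}$ both $\psi_t$ and $g_t$ vanish, so $\zeta_t \psi_t = g_t$ holds everywhere and the rest of your argument goes through unchanged.
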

\begin{proof}
This follows directly from Theorem 3.4 and Remark B.1 of \cite{acciaio2012risk}.
\end{proof}

\begin{remark}\label{rem:psi}
We wish to note that the decomposition of $\bar\Q \in \bar\M(\bar\P)$ as $\bar\Q = \Q \otimes \psi$ provided in Theorem~\ref{thm:barQ} can be defined with arbitrary $\psi_s \geq 0$ on $\{\E_t[\frac{d\Q}{d\P}] = 0\}$ so long as $\psi \in \Psi(\P)$.  Without loss of generality, we will always consider the modified decomposition $\Q \otimes \hat\psi$ such that:
\begin{align*}
\hat\psi_t = \begin{cases} \psi_t &\text{on } \{\tau(\Q) > t\} \\ \frac{\mu_t}{1 - \sum_{s = 0}^{\tau(\Q)-1} \mu_s}\left(1 - \sum_{s = 0}^{\tau(\Q)-1} \psi_s\right) &\text{on } \{\tau(\Q) \leq t\} \end{cases}
\end{align*}
with stopping time $\tau(\Q) := \min\{t \in \bbt \; | \; \E_t[\frac{d\Q}{d\P}] = 0\}$.
\end{remark}

As we often are interested in those probability measures that are equivalent to $\bar\P$ on the filtration, we present the following corollary.
\begin{corollary}
\label{cor:barQ-t}
Let $\bar\Q \in \bar\M(\bar\P)$ with decomposition $\bar\Q = \Q \otimes \psi$ (as defined in Theorem~\ref{thm:barQ}).  Then, for any time $t \in \bbt$, it follows that
\[\bar\Q = \bar\P \text{ on } \bar\fcal_t \quad \Leftrightarrow \quad \left\{\begin{array}{l}\Q = \P \text{ on } \fcal_t \\ \psi_s = \mu_s \; \forall s < t. \end{array}\right.\]
\end{corollary}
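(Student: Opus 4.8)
The plan is to reduce the measure equality $\bar\Q=\bar\P$ on $\bar\fcal_t$ to a family of scalar test-function identities and then extract $\P$-a.s.\ conditions on the decomposition $\bar\Q=\Q\otimes\psi$. Write $Z:=\frac{d\Q}{d\P}$, so that $\E^\Q[\,\cdot\,]=\E[Z\,\cdot\,]$, and recall that $\Q=\P$ on $\fcal_t$ is equivalent to $\E_t[Z]=1$ $\P$-a.s. By the characterization of $\bar\fcal_t$-measurability recalled before the corollary, every bounded $\bar\fcal_t$-measurable $X$ is determined by $(X_0,\dots,X_t)$ with $X_r\in L^\infty_r(\bbr)$, extended by $X_s=X_t$ for $s\in\bbt_t$. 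Substituting this structure together with the representation of Theorem~\ref{thm:barQ} into $\bar\E^{\bar\Q}[X]=\bar\E[X]$ gives, for all such $X$,
\[
\E\Big[Z\Big(\sum_{r\in[0,t)}\psi_r X_r + \Big(\sum_{s\in\bbt_t}\psi_s\Big)X_t\Big)\Big] = \E\Big[\sum_{r\in[0,t)}\mu_r X_r + \Big(\sum_{s\in\bbt_t}\mu_s\Big)X_t\Big].
\]
Varying each $X_r$ ($r\in[0,t)$) and $X_t$ independently, and using that $\psi$ is adapted (so $\psi_r$ is $\fcal_r$-measurable and $\sum_{s\in\bbt_t}\psi_s=1-\sum_{r\in[0,t)}\psi_r$ is $\fcal_t$-measurable), this identity is equivalent to the pair
\[
\psi_r\,\E_r[Z]=\mu_r \ \ (r\in[0,t)), \qquad \Big(\sum_{s\in\bbt_t}\psi_s\Big)\E_t[Z]=\sum_{s\in\bbt_t}\mu_s,
\]
which I will call (A) and (B).

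Given (A) and (B), the $(\Leftarrow)$ direction is a direct verification: if $\E_t[Z]=1$ and $\psi_r=\mu_r$ for $r<t$, then $\E_r[Z]=\E_r[\E_t[Z]]=1$ for $r<t$ yields (A), while $\sum_{s\in\bbt_t}\psi_s=1-\sum_{r\in[0,t)}\mu_r=\sum_{s\in\bbt_t}\mu_s$ $\P$-a.s.\ yields (B). For the $(\Rightarrow)$ direction I would induct on $t$, exploiting $\bar\fcal_{t-1}\subseteq\bar\fcal_t$ so that $\bar\Q=\bar\P$ on $\bar\fcal_t$ forces the same on $\bar\fcal_{t-1}$. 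The base case $t=0$ is immediate, since $\bar\fcal_0$ is trivial and $\E_0[Z]=\E[Z]=1$. For the step, the inductive hypothesis gives $\E_{t-1}[Z]=1$ and $\psi_r=\mu_r$ for $r<t-1$; then (A) at $r=t-1$ yields $\psi_{t-1}=\mu_{t-1}$, so $\psi_r=\mu_r$ for all $r<t$, and substituting $\sum_{s\in\bbt_t}\psi_s=\sum_{s\in\bbt_t}\mu_s$ into (B), together with $\sum_{s\in\bbt_t}\mu_s\ge\mu_t>\epsilon>0$ (which may therefore be divided out), forces $\E_t[Z]=1$, i.e.\ $\Q=\P$ on $\fcal_t$.

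The main obstacle is conceptual rather than computational: a $\bar\fcal_t$-measurable test function couples its time-$t$ value $X_t$ to the entire tail weight $\sum_{s\in\bbt_t}\psi_s$ rather than to $\psi_t$ alone, so condition (B) only constrains $\E_t[Z]$ against this aggregate tail and cannot by itself separate the density from the remaining mass of $\psi$. The induction is exactly what disentangles them: once $\psi_r=\mu_r$ and $\E_r[Z]=1$ are known for all $r<t$, the tail weight is pinned to $\sum_{s\in\bbt_t}\mu_s$ and (B) collapses to $\E_t[Z]=1$. I would also remark that (A) forces $\E_r[Z]>0$ a.s.\ for $r<t$ (since $\mu_r>\epsilon$), so the ambiguity of the decomposition on $\{\E_r[Z]=0\}$ noted in Remark~\ref{rem:psi} never arises on the relevant event; consequently the statement is independent of the chosen version of $\psi$.
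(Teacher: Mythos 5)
Your argument is correct. The paper itself does not prove this corollary from scratch: it simply invokes Lemma B.5 of Acciaio, F\"ollmer and Penner, so your self-contained verification is a genuinely different (and more elementary) route. Your reduction of $\bar\Q=\bar\P$ on $\bar\fcal_t$ to the two conditions $\psi_r\,\E_r[Z]=\mu_r$ for $r<t$ and $\bigl(\sum_{s\in\bbt_t}\psi_s\bigr)\E_t[Z]=\sum_{s\in\bbt_t}\mu_s$ is valid: the test functions $X_r\ind{r}$ ($r<t$) and $X_t\indn{\bbt_t}$ are indeed $\bar\fcal_t$-measurable by the characterization recalled in Section 2.1, and the adaptedness of $\psi$ lets you pull $\psi_r$ (resp.\ the tail sum, which is even $\fcal_{t-1}$-measurable) inside the conditional expectation. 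The induction in the $(\Rightarrow)$ direction is exactly the right device for decoupling $\E_t[Z]$ from the aggregate tail weight in condition (B), and the division by $\sum_{s\in\bbt_t}\mu_s$ is licensed by the standing assumption $\P[\min_{t}\mu_t>\epsilon]=1$. Your closing observation that (A) forces $\E_r[Z]\geq\mu_r>0$ for $r<t$, so the non-uniqueness of $\psi$ flagged in Remark~\ref{rem:psi} never interferes with the statement, is a worthwhile addition that the paper leaves implicit. What the citation buys the paper is brevity and consistency with the scalar framework it builds on; what your proof buys is transparency about exactly which structural facts (adaptedness of $\psi$, uniform positivity of $\mu$, triviality of $\fcal_0$) the equivalence depends on.
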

\begin{proof}
This follows directly from \cite[Lemma B.5]{acciaio2012risk}.
\end{proof}

We note that, from Corollary \ref{cor:barQ-t}, it follows that
\begin{align*}
\bar\M_t(\bar\P) &:= \{\bar\Q \in \bar\M(\bar\P) \; : \; \bar\Q = \bar\P \text{ on } \bar\fcal_t\}\\
&= \{\Q \otimes \psi \; : \; \Q \in \M_t(\P), \;\psi \in \Psi(\P), \; \psi_s = \mu_s \; \forall s < t\}.
\end{align*}

As for the original probability space, we will use a $\bar\P$-almost sure version of the $\bar\Q$-conditional expectation of $X \in \bar L^{\infty}(\bbr^d)$.
Denote by $\bar\E^{\bar\Q}_t[X] := \bar\E^{\bar\Q}[X|\bar\fcal_t]$ as the conditional expectation of $X \in \bar L^{\infty}(\bbr^d)$.
This $\bar\P$-almost sure version can be provided by $\bar\xi(\bar\Q)$ construction as for the original probability space, i.e., with
\begin{align}
\label{eq:barxi} \bar\xi_{t,s}(\bar\Q)_r &:= \begin{cases} \frac{\bar\E_s[\frac{d\bar\Q}{d\bar\P}]_r}{\bar\E_t[\frac{d\bar\Q}{d\bar\P}]_r} &\text{on } \{\bar\E_t[\frac{d\bar\Q}{d\bar\P}]_r > 0\} \\ 1 &\text{otherwise} \end{cases}\\
\nonumber &= \begin{cases} \left(\frac{1 - \sum_{\tau = 0}^{t-1} \mu_\tau}{\mu_r}\right)\left(\frac{\psi_r}{1 - \sum_{\tau = 0}^{t-1} \psi_\tau}\right) \xi_{t,r}(\Q) &\text{on } \{r \in [t,s), \; \sum_{\tau = 0}^{t-1} \psi_\tau < 1\} \\ \left(\frac{1 - \sum_{\tau = 0}^{t-1} \mu_\tau}{\sum_{\tau \in \bbt_s} \mu_\tau}\right)\left(\frac{\sum_{\tau \in \bbt_s} \psi_\tau}{1 - \sum_{\tau = 0}^{t-1} \psi_\tau}\right) \xi_{t,s}(\Q) &\text{on } \{r \geq s, \; \sum_{\tau = 0}^{t-1} \psi_\tau < 1\} \\ 1 &\text{otherwise} \end{cases}
\end{align}
under decomposition $\bar\Q = \Q \otimes \psi$ with $\psi$ following the form provided in Remark~\ref{rem:psi}.
In the following Corollary \ref{cor:barQ-conditional} we provide an equivalent representation for the conditional expectation.
\begin{corollary}
\label{cor:barQ-conditional}
For $\bar\Q \in \bar\M(\bar\P)$ with decomposition $\bar\Q = \Q \otimes \psi$, the conditional expectation of $X \in \bar L^\infty(\bbr)$ given $\bar\fcal_t$ takes the form
\begin{equation*}
\bar\E_t^{\bar\Q}[X] = \sum_{s = 0}^{t-1} X_s \ind{s} + \left(\begin{array}{l} \ind{\sum_{r = 0}^{t-1} \psi_r < 1}\E_t^\Q\left[\sum_{s \in \bbt_t} \frac{\psi_s}{1 - \sum_{r = 0}^{t-1} \psi_r} X_s\right]\\ + \ind{\sum_{r = 0}^{t-1} \psi_r = 1}\E_t\left[\sum_{s \in \bbt_t} \frac{\mu_s}{1 - \sum_{r = 0}^{t-1} \mu_r} X_s\right]\end{array}\right)\indn{\bbt_t}.
\end{equation*}
\end{corollary}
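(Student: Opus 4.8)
The plan is to recognise the claimed right-hand side as the designated $\bar\P$-almost sure version $\bar\E_t^{\bar\Q}[X] = \bar\E_t[\bar\xi_{t,T}(\bar\Q)X]$, with $\bar\xi_{t,T}(\bar\Q)$ obtained by setting $s = T$ in \eqref{eq:barxi}, and to evaluate this expectation explicitly. Writing $Y$ for the process on the right-hand side, I would first confirm that $Y$ is $\bar\fcal_t$-measurable and $\bar\P$-almost surely well defined: for $r < t$ its component is the $\fcal_r$-measurable $X_r$, and for $r \geq t$ it equals a single $\fcal_t$-measurable random variable independent of $r$, which is exactly the structure characterising $\bar\fcal_t$-measurable random vectors. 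The denominators cause no trouble, since $1 - \sum_{\tau<t}\psi_\tau$ is strictly positive wherever its indicator is active, and $1 - \sum_{\tau<t}\mu_\tau > 0$ always because $\P[\min_{s\in\bbt}\mu_s > \epsilon] = 1$ forces $\sum_{\tau<t}\mu_\tau \leq 1 - \mu_T < 1$; note also that $\sum_{\tau<t}\psi_\tau$ is $\fcal_{t-1}$-measurable by adaptedness of $\psi \in \Psi(\P)$.

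The core of the argument is to compute the $\bar\P$-conditional expectation $\bar\E_t[W]$ of an arbitrary $W \in \bar L^\infty(\bbr)$. Using the defining averaging property against bounded $\bar\fcal_t$-measurable test functions, together with $\bar\E[\,\cdot\,] = \E[\sum_{s\in\bbt}\mu_s(\cdot)_s]$ and the explicit generators of $\bar\fcal_t$, I would show $\bar\E_t[W]_r = W_r$ for $r < t$ and $\bar\E_t[W]_r = \E_t[\sum_{s\in\bbt_t}\frac{\mu_s}{1-\sum_{\tau<t}\mu_\tau}W_s]$ for $r \geq t$; the factor $1 - \sum_{\tau<t}\mu_\tau = \E_t[\sum_{s\in\bbt_t}\mu_s]$ arises as the conditional $\bar\P$-mass of the event $\{\text{time} \geq t\}$.

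It then remains to substitute $W = \bar\xi_{t,T}(\bar\Q)X$ and insert \eqref{eq:barxi}. For $r < t$ one has $\bar\xi_{t,T}(\bar\Q)_r = 1$, giving $Y_r = X_r$. For $r \geq t$ on $\{\sum_{\tau<t}\psi_\tau < 1\}$, the $\mu_s$ factors cancel against those in $\bar\xi_{t,T}(\bar\Q)_s$, leaving weights $\frac{\psi_s}{1-\sum_{\tau<t}\psi_\tau}\xi_{t,s}(\Q)$; converting $\E_t[\xi_{t,s}(\Q)\psi_s X_s]$ into $\E_t^\Q[\psi_s X_s]$ via the density martingale identity $\E_s[\xi_{t,T}(\Q)] = \xi_{t,s}(\Q)$ and the $\fcal_s$-measurability of $\psi_s X_s$ yields the first branch. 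On $\{\sum_{\tau<t}\psi_\tau = 1\}$ the ``otherwise'' clause of \eqref{eq:barxi} gives $\bar\xi_{t,T}(\bar\Q)_s = 1$, so the identical computation with $\mu$ in place of $\psi$ produces the second branch.

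The main obstacle I anticipate is the degenerate set $\{\sum_{\tau<t}\psi_\tau = 1\}$, on which $\sum_{s\in\bbt_t}\psi_s = 0$ and the $\bar\Q$-density is not canonically defined: one must verify that the switch to the $\mu$-weighting in \eqref{eq:barxi} delivers a genuine $\bar\P$-almost sure version (this set is $\bar\Q$-negligible in the components $s \geq t$ but carries $\bar\P$-mass). The remaining delicate point is the density identity converting $\E_t[\xi_{t,s}(\Q)\,\cdot\,]$ into $\E_t^\Q[\,\cdot\,]$, whose behaviour on $\{\E_t[\frac{d\Q}{d\P}] = 0\}$ is governed by the conventions fixed in Remark~\ref{rem:psi}; beyond these, the computation is routine cancellation and linearity.
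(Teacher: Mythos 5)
Your proof is correct, but it travels a different road than the paper's. The paper starts from the defining averaging identity of the $\bar\Q$-conditional expectation, $\E^\Q[\sum_{s\in\bbt}\psi_s X_s]=\E^\Q[\sum_{s<t}\psi_s\bar\E_t^{\bar\Q}[X]_s+(\sum_{s\in\bbt_t}\psi_s)\bar\E_t^{\bar\Q}[X]_t]$, matches terms to obtain the $\bar\Q$-a.s.\ version ($\bar\E_t^{\bar\Q}[X]_s=X_s$ on $\{\psi_s>0\}$ for $s<t$, and $\bar\E_t^{\bar\Q}[X]_t=\E_t^\Q[\sum_{s\in\bbt_t}\psi_sX_s]/\sum_{s\in\bbt_t}\psi_s$ on $\{\sum_{s\in\bbt_t}\psi_s>0\}$), and only then invokes $\bar\xi_{t,T}(\bar\Q)$ from \eqref{eq:barxi} to fix the $\bar\P$-a.s.\ version, with the $\mu$-weighted branch emerging from the convention $\bar\xi_{t,T}(\bar\Q)_r=1$ on $\{\sum_{\tau=0}^{t-1}\psi_\tau=1\}$. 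You instead take the density version $\bar\E_t^{\bar\Q}[X]=\bar\E_t[\bar\xi_{t,T}(\bar\Q)X]$ as the starting point, compute $\bar\E_t[\,\cdot\,]$ under $\bar\P$ explicitly (components $W_r$ for $r<t$ and $\E_t[\sum_{s\in\bbt_t}\mu_s W_s]/(1-\sum_{\tau<t}\mu_\tau)$ for $r\ge t$), and substitute \eqref{eq:barxi}; the cancellations and the martingale identity $\E_s[\xi_{t,T}(\Q)]=\xi_{t,s}(\Q)$ then deliver both branches. Each approach has a virtue: the paper's derivation explains \emph{why} the formula is the conditional expectation without leaning on the explicit second line of \eqref{eq:barxi}, while yours keeps the $\bar\P$-a.s.\ version in view throughout, so the behaviour on the $\bar\Q$-negligible set $\{\sum_{\tau<t}\psi_\tau=1\}$ (which carries $\bar\P$-mass) is handled by computation rather than by appeal to a convention at the last step. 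The delicate points you flag — strict positivity of $1-\sum_{\tau<t}\mu_\tau$ from $\P[\min_{s\in\bbt}\mu_s>\epsilon]=1$, $\fcal_{t-1}$-measurability of $\sum_{\tau<t}\psi_\tau$ so the indicators pull out of $\E_t$, and the conventions of Remark~\ref{rem:psi} governing $\{\E_t[\frac{d\Q}{d\P}]=0\}$ — are exactly the right ones and all resolve as you anticipate.
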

\begin{proof}
First, by definition of the conditional expectation and Theorem~\ref{thm:barQ},
\[\E^\Q\left[\sum_{s \in \bbt} \psi_s X_s\right] = \E^\Q\left[\sum_{s = 0}^{t-1} \psi_s \bar\E^{\bar\Q}_t[X]_s + \left(\sum_{s \in \bbt_t} \psi_s\right)\bar\E^{\bar\Q}_t[X]_t\right]\]
for any $X \in \bar L^\infty(\bbr)$.
By matching terms,
\begin{align*}
\bar\E^{\bar\Q}_t[X]_s &= X_s~\Q\mbox{-a.s.} \text{ on } \{\psi_s > 0\}\text{ for } s < t,\\
\bar\E^{\bar\Q}_t[X]_t &= \frac{\E^\Q_t[\sum_{s \in \bbt_t} \psi_s X_s]}{\sum_{s \in \bbt_t} \psi_s}~\Q\mbox{-a.s.} \text{ on } \{\sum_{s \in \bbt_t} \psi_s > 0\}.
\end{align*}
Noting that $\sum_{s \in \bbt_t} \psi_s = 1 - \sum_{s = 0}^{t-1} \psi_s$ by construction, we first recover the $\bar\Q\mbox{-a.s.}$ version of the conditional expectation as presented in, e.g., \cite[Corollary B.3]{acciaio2012risk}.
Finally, we recover the representation of the conditional expectation presented within this corollary by taking the $\bar\P\mbox{-a.s.}$ version of the $\bar\Q$-conditional expectation through the use of $\bar\xi(\bar\Q)$ as presented in \eqref{eq:barxi}; in particular, $\bar\xi_{t,T}(\bar\Q)_r = 1$ for every time $r$ on $\{\sum_{\tau = 0}^{t-1} \psi_\tau = 1\}$. 
\end{proof}

We further define the functions $\bar w_t^s: \bar\M(\bar\P)^d \times \bar L_t^1(\bbr^d) \to \bar L_s^0(\bbr^d)$ for any $t,s \in \bbt$ with $t < s$ by
\begin{align*}
\bar w_t^s(\bar\Q,\bar w) &:= \sum_{r = 1}^{t-1} \bar w_r \ind{r} + \diag(\bar w_t) \left(\sum_{r = t}^{s-1} \bar\xi_{t,s}(\bar\Q)_r \ind{r} + \bar\xi_{t,s}(\bar\Q)_s \indn{\bbt_s}\right)
\end{align*}
where $\bar\xi_{t,s}(\bar\Q)$ is defined in~\eqref{eq:barxi} comparably to $\xi_{t,s}(\Q)$ for $\Q \in \M(\P)^d$.  By convention, for any $t \in \bbt$, $\bar w_t^t(\bar\Q,\bar w) = \bar w$ for any $\bar\Q \in \bar\M(\bar\P)^d$ and $\bar w \in \bar L_t^1(\bbr^d)$.
We note that, through the decomposition of $\bar\Q$ as in Theorem~\ref{thm:barQ}, an equivalent formulation for $\bar w_t^s(\bar\Q,\bar w)$ can be given in the style of Corollary~\ref{cor:barQ-conditional}.

We conclude this section with a brief result relating the topologies of $\bar L^{\infty}(\bbr^d)$ and $\rcal^{\infty,d}$.  This will be utilized later in this work.
\begin{lemma}
\label{lemma:convergence}
\begin{enumerate}
\item Let $X^{n} \to X$ in $\sigma(\bar L^{\infty}(\bbr^d),\bar L^1(\bbr^d))$.  Then $X_t^{n} \to X_t$ in $\sigma(L_t^{\infty}(\bbr^d),L_t^1(\bbr^d))$ and $X^{n}\indn{\bbt_t} \to X \indn{\bbt_t}$ in $\sigma(\rcal_t^{\infty,d},\acal_t^{1,d})$ for every time $t \in \bbt$.
\item Fix $t \in \bbt$.  Let $X_s^{n}  \to X_s$ in $\sigma(L_s^{\infty}(\bbr^d),L_s^1(\bbr^d))$ for every $s < t$ and $X^{n}  \indn{\bbt_t} \to X \indn{\bbt_t}$ in $\sigma(\rcal_t^{\infty,d},\acal_t^{1,d})$.  Then $X^{n}  \to X$ in $\sigma(\bar L^{\infty}(\bbr^d),\bar L^1(\bbr^d))$.
\end{enumerate}
\end{lemma}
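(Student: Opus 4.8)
The plan is to unwind both weak* topologies into their defining bilinear pairings and then, for each direction, exhibit explicit test elements that connect them. Recall that by the measurability characterization on $(\bar\Omega,\bar\fcal,\bar\P)$ one identifies $\bar L^\infty(\bbr^d)$ with the bounded adapted processes and $\bar L^1(\bbr^d)$ with the integrable adapted processes, the pairing being $\bar\E[Y^\T X] = \E[\sum_{s \in \bbt} \mu_s Y_s^\T X_s]$. Likewise, the duality realizing the dual pair $(\rcal_t^{\infty,d},\acal_t^{1,d})$ is $\langle X, a\rangle = \E[\sum_{s \in \bbt} X_s^\T \Delta a_s]$ (as in~\cite{CDK06}), where $\Delta a_s$ ranges over arbitrary $\fcal_s$-measurable increments of integrable total variation as $a$ ranges over $\acal_t^{1,d}$, with $a_s = 0$ for $s < t$ so that $\Delta a_t = a_t$. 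The only quantitative facts I will use about $\mu$ are $\epsilon \le \mu_s \le 1$ $\P$-a.s., which guarantee that dividing or multiplying by $\mu_s$ preserves both $\fcal_s$-measurability and $L^1$/$L^\infty$ membership.

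For part (1), fix $t$. Given $u \in L_t^1(\bbr^d)$, I define $Y \in \bar L^1(\bbr^d)$ by $Y_t := u/\mu_t$ and $Y_s := 0$ for $s \neq t$; this is adapted and integrable since $\bar\E[|Y|] = \E[|u|] < \infty$, and it satisfies $\bar\E[Y^\T X] = \E[u^\T X_t]$ for every $X$. Hence the assumed convergence $X^n \to X$ tested against this $Y$ yields $\E[u^\T X_t^n] \to \E[u^\T X_t]$, and since $u$ was arbitrary, $X_t^n \to X_t$ in $\sigma(L_t^\infty(\bbr^d),L_t^1(\bbr^d))$. Given instead $a \in \acal_t^{1,d}$, I define $Y_s := \Delta a_s/\mu_s$ for $s \ge t$ and $Y_s := 0$ for $s < t$; again $Y \in \bar L^1(\bbr^d)$ because $\bar\E[|Y|] = \E[\sum_{s \ge t}|\Delta a_s|] < \infty$, and by construction $\bar\E[Y^\T X] = \E[\sum_{s \ge t} X_s^\T \Delta a_s] = \langle X\indn{\bbt_t}, a\rangle$ for every $X$, since $(X\indn{\bbt_t})_s$ vanishes for $s<t$. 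Thus weak* convergence against this $Y$ gives the claimed convergence of $X^n\indn{\bbt_t}$.

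For part (2), fix $t$ and take an arbitrary $Y \in \bar L^1(\bbr^d)$; I split $\bar\E[Y^\T X^n] = \E[\sum_{s < t}\mu_s Y_s^\T X_s^n] + \E[\sum_{s \ge t}\mu_s Y_s^\T X_s^n]$. For the first (finite) sum, each $\mu_s Y_s$ lies in $L_s^1(\bbr^d)$ (using $\mu_s \le 1$ and $Y_s \in L_s^1(\bbr^d)$, the latter because $\E[|Y_s|] \le \epsilon^{-1}\bar\E[|Y|]$), so the hypothesis $X_s^n \to X_s$ in $\sigma(L_s^\infty(\bbr^d),L_s^1(\bbr^d))$ makes each of the finitely many terms converge. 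For the second sum, I define $a$ by $a_\tau := \sum_{s=t}^\tau \mu_s Y_s$ for $\tau \ge t$ and $a_\tau := 0$ for $\tau < t$; then $a$ is adapted with $\Delta a_\tau = \mu_\tau Y_\tau$ for $\tau \ge t$ and $\E[\sum_\tau |\Delta a_\tau|] \le \bar\E[|Y|] < \infty$, so $a \in \acal_t^{1,d}$, and $\langle X^n\indn{\bbt_t},a\rangle = \E[\sum_{s \ge t}\mu_s Y_s^\T X_s^n]$. The hypothesis $X^n\indn{\bbt_t}\to X\indn{\bbt_t}$ then makes the second sum converge as well, and adding the two limits gives $\bar\E[Y^\T X^n]\to\bar\E[Y^\T X]$. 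Since $Y$ was arbitrary, $X^n\to X$ in $\sigma(\bar L^\infty(\bbr^d),\bar L^1(\bbr^d))$.

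I expect the only real obstacle to be bookkeeping rather than conceptual: correctly identifying the $(\rcal_t^{\infty,d},\acal_t^{1,d})$ pairing and, in particular, the boundary increment $\Delta a_t = a_t$ arising from the projection $\pi_{t,T}$, which kills the history before $t$, together with the measurability and integrability verifications for the constructed $Y$ and $a$, all of which hinge on the uniform bounds $\epsilon \le \mu_s \le 1$. Once the two pairings are written out explicitly, both directions reduce to selecting the correct test element and reading off the desired convergence.
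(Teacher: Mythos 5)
Your proof is correct and follows essentially the same route as the paper's: in both directions you pair against the same explicit test elements ($Y = \mu_t^{-1}Z_t\ind{t}$ and $Y_s = \mu_s^{-1}\Delta a_s$ for part (1); $Z_s = \mu_s Y_s$ and $\Delta a_s = \mu_s Y_s$ for part (2)), and the splitting of $\bar\E[Y^\T X^n]$ into the pre-$t$ and post-$t$ contributions is exactly the paper's argument. The additional integrability and measurability checks you supply via $\epsilon \le \mu_s \le 1$ are consistent with, and slightly more explicit than, what the paper records.
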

\begin{proof}
\begin{enumerate}
\item Let $X^{n}  \to X$ in $\sigma(\bar L^{\infty}(\bbr^d),\bar L^1(\bbr^d))$.  And fix $t \in \bbt$.
    \begin{enumerate}
    \item Let $Z_t \in L_t^1(\bbr^d)$ and define $Y := \mu_t^{-1} Z_t \ind{t} \in \bar L^1(\bbr^d)$. It follows that
        \begin{align*}
        \E[Z_t^\T X_t^{n}] &= \E[\sum_{s \in \bbt} \mu_s Y_s^\T X_s^{n}] = \bar\E[Y^\T X^{n}] \to \bar\E[Y^\T X] = \E[Z_t^\T X_t].
        \end{align*}
    \item Let $a \in \acal_t^{1,d}$.  Define $Y := \sum_{s \in \bbt_t} \mu_s^{-1} \Delta a_s \ind{s} \in \bar L^1(\bbr^d)$. Therefore we find
        \begin{align*}
        \E[\sum_{s \in \bbt_t} \Delta a_s^\T X_s^{n}] &= \E[\sum_{s \in \bbt} \mu_s Y_s^\T X_s^{n}] = \bar\E[Y^\T X^{n}] \to \bar\E[Y^\T X] = \E[\sum_{s \in \bbt_t} \Delta a_s^\T X_s].
        \end{align*}
    \end{enumerate}
\item Fix $t \in \bbt$.  Let $X_s^{n}  \to X_s$ in $\sigma(L_s^{\infty}(\bbr^d),L_s^1(\bbr^d))$ for every $s < t$ and $X^{n}  \indn{\bbt_t} \to X \indn{\bbt_t}$ in $\sigma(\rcal_t^{\infty,d},\acal_t^{1,d})$.  Let $Y \in \bar L^1(\bbr^d)$ and $Z_s := \mu_s Y_s \in L_s^1(\bbr^d)$ for every $s < t$ and $\Delta a_s := \mu_s Y_s$ for every $s \in \bbt_t$ (with $\Delta a_s := 0$ for $s < t$) so that $a_r = \sum_{s = 0}^r \Delta a_r$ defines $a \in \acal_t^{1,d}$. Then we obtain that
    \begin{align*}
    \bar\E[Y^\T X^{n}] &= \E[\sum_{s \in \bbt} \mu_s Y_s^\T X_s^{n}] = \sum_{s = 0}^{t-1} \E[Z_s^\T X_s^{n}] + \E[\sum_{s \in \bbt_t} \Delta a_s^\T X_s^{n}]\\
    &\to \sum_{s = 0}^{t-1} \E[Z_s^\T X_s] + \E[\sum_{s \in \bbt_t} \Delta a_s^\T X_s] = \bar\E[Y^\T X].
    \end{align*}
\end{enumerate}
\end{proof}

\subsection{Risk measures for processes}\label{sec:rm-process}

In this section we provide a quick overview of the definition of set-valued risk measures for processes as defined in~\cite{CH17,CH20}.  Herein we present an axiomatic framework for such functions in Definition~\ref{defn:rm-process}.  We then summarize prior results on the primal representation w.r.t.\ an acceptance set. We conclude this section with considerations for a novel dual representation for these conditional risk measures.

\begin{definition}\label{defn:rm-process}
A function $\rho_t: \rcal_t^{\infty,d} \to \ucal(M_t;M_{t,+})$  for $t \in \bbt$ is  called a \textbf{\emph{set-valued conditional risk measure for processes}} if it satisfies the following properties for all $X, Y  \in  \rcal_t^{\infty,d}$,
\begin{enumerate}
\item Cash invariant: for any $m \in M_t$, 
    \[\rho_t(X + m \indn{\bbt_t})=\rho_t(X)-m;\]
\item Monotone: $\rho_t(X) \subseteq \rho_t(Y) $ if $X \leq Y$ component-wise;
\item Finite at zero: $\rho_t(0) \neq \emptyset$ is closed (w.r.t.\ the subspace topology on $M_t$) and $\fcal_t$-decomposable (i.e., $\indn{A}\rho_t(0) + \indn{A^c}\rho_t(0) \subseteq \rho_t(0)$ for any $A \in \fcal_t$; see, e.g., Chapter 2 of Molchanov~\cite{M05}), and $\P[\tilde\rho_t(0) = M] =0$ where $\tilde\rho_t(0)$ is an $\fcal_t$-measurable random set (i.e., $\operatorname{graph} \tilde\rho_t(0) := \{(\omega, x) \in \Omega \times \bbr^d \; : \; x \in \rho_t(0;\omega)\}$ is $\fcal_t \otimes \mathscr{B}(\bbr^d)$-measurable for Borel $\sigma$-algebra $\mathscr{B}(\bbr^d)$ of $\bbr^d$; see, e.g., Tahar and L\'{e}pinette~\cite{TL12}) such that $\rho_t(0) = L_t^\infty(\tilde\rho_t(0))$.
\end{enumerate}
A conditional risk measure for processes at time $t \in \bbt$ is said to be:
\begin{itemize}
\item Normalized if $\rho_t(X) = \rho_t(X) + \rho_t(0)$ for every $X \in \rcal_t^{\infty,d}$;
\item Conditionally convex if for all $X,Y \in \rcal_t^{\infty,d}$ and $\lambda \in L_t^\infty([0,1])$
\begin{align*}
\rho_t(\lambda X+(1-\lambda)Y) \supseteq \lambda \rho_t(X)+(1-\lambda)\rho_t(Y);
\end{align*}
\item Conditionally positive homogeneous if for all $X \in \rcal_t^{\infty,d}$ and $\lambda \in L_t^\infty(\bbr_{++})$
\begin{align*}
\rho_t(\lambda X)=\lambda\rho_t(X);
\end{align*}
\item Conditionally coherent if it is conditionally convex and conditionally positive homogeneous;
\item Closed if the graph of $\rho_t$
\begin{align*}
\operatorname{graph}\rho_t := \{(X,u) \in \rcal_t^{\infty,d} \times M_t \; : \; u \in \rho_t(X)\}
\end{align*}
is closed in the product topology;
\item Conditionally convex upper continuous (c.u.c.) if
\begin{align*}
\rho_t^-(D) := \{X \in \rcal_t^{\infty,d} \; : \; \rho_t(X) \cap D \neq \emptyset\}
\end{align*}
is closed for any $\fcal_t$-conditionally convex set $D \in \gcal(M_t;-M_{t,+})$.
\end{itemize}
A \textbf{\emph{dynamic risk measure for processes}} is a sequence of conditional risk measures for processes $(\rho_t)_{t \in \bbt}$. And a dynamic risk measure for processes is said to have one of the above properties if $\rho_t$ has the corresponding property for any $t\in\bbt$.
\end{definition}

For the interpretation of these risk measures, recall that $X \in \rcal^{\infty,d}$ denotes the evolution of a financial value or the cumulative cash flow over time.  As such, for instance, cash invariance implies that an addition of $m$ eligible assets at time $t$ to $X \in \rcal^{\infty,d}_t$ which is held until $T$ will transform the value into $X+m\indn{\bbt_t}$ and the risk is decreased by $m$ at time $t$.


We now consider the primal representation of a set-valued conditional risk measure for processes.  That is, for any risk measure, let the \emph{acceptance set} be defined as those claims that do not require the addition of any eligible asset in order to be ``acceptable'' to the risk manager:
\[A_{\rho_t}:=\{X \in \rcal_t^{\infty,d} \; : \; 0 \in \rho_t(X)\}.\]
As shown in~\cite{CH17,CH20}, these acceptance sets uniquely define the risk measure via the relation
\[\rho_t^{A_t}(X) = \{m \in M_t \; : \; X + m\indn{\bbt_t} \in A_t\}\]
for any $X \in \rcal_t^{\infty,d}$.
Furthermore, as provided by~\cite[Proposition 2.1]{CH17}, there is a one-to-one relation between the risk measure and its acceptance set so that
\begin{align*}
\rho_t &= \rho_t^{A_{\rho_t}} \quad \text{ and } \quad A_t = A_{\rho_t^{A_t}}.
\end{align*}
Throughout this work, we will take the risk measure-acceptance set pair $(\rho_t,A_t)$ without use of the explicit sub- or superscript notation.

With this construction of risk measures for processes, we give a consideration of the dual representation.
The following dual representation -- Corollary \ref{cor:dual-process} -- is novel in the literature for set-valued risk measures for processes, we will leave the proof until Section~\ref{sec:equiv-dual}.  Though this result can be proven using the duality theory of Hamel et al.~\cite{setOPsurvey} and following the same logic as the proof of \cite[Corollary 2.4]{FR15-supermtg}, we will take a different approach by utilizing the prior results on risk measures for vectors and the yet-to-be-proven equivalence between these formulations.
For this result, we first need to consider the set of dual variables
\begin{align*}
\W_t &:= \left\{(\Q,w) \in \dcal_t \; : \; w_t^s(\Q_s,w_s) \in L_s^1(\bbr^d_+) \, \forall s \in \bbt_t\right\},\\
\dcal_t &:= \M_t(\P)^{d \times (T-t+1)} \times \left(M_{t,+}^{\ast} \setminus M_t^\perp\right)^{T-t+1}.
\end{align*}
Notably $(\Q,w) \in \W_t$ implies $w_s \geq 0 \; \Q_s\mbox{-a.s.}$ since $w_t^s(\Q_s,w_s) \in L_s^1(\bbr^d_+)$.
\begin{corollary}
\label{cor:dual-process}
A function $\rho_t: \rcal_t^{\infty,d} \to \gcal(M_t;M_{t,+})$ is a \textbf{\emph{closed conditionally convex risk measure}} if and only if
\begin{equation}
\label{eq:dual-process}
\rho_t(X) = \bigcap_{(\Q,w) \in \W_t} \bigg(\sum_{s \in \bbt_t} \Big(\big(\E_t^{\Q_s}[-X_s] + \Gamma_t(w_s)\big) \cap M_t\Big) -^{\sbullet} \alpha_t(\Q,w)\bigg),
\end{equation}
where $\alpha_t$ is the minimal conditional penalty function given by
\begin{equation*}
\alpha_t(\Q,w) = \bigcap_{Y \in A_t} \sum_{s \in \bbt_t} \Big(\big(\E_t^{\Q_s}[-Y_s] + \Gamma_t(w_s)\big) \cap M_t\Big).
\end{equation*}
$\rho_t$ is additionally \textbf{\emph{conditionally coherent}} if and only if
\begin{equation*}
\rho_t(X) = \bigcap_{(\Q,w) \in \W_t^{\max}} \sum_{s \in \bbt_t} \Big(\big(\E_t^{\Q_s}[-X_s] + \Gamma_t(w_s)\big) \cap M_t\Big),
\end{equation*}
for
\begin{equation*}
\W_t^{\max} = \left\{(\Q, w) \in \W_t \; : \;  \sum_{s\in \bbt_t} w_s^\T \E^{\Q_s}_t[Z_s] \geq 0 \ \text{ for any } Z \in A_t\right\}.
\end{equation*}
\end{corollary}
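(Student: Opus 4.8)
The plan is to obtain the process representation from the (known) dual representation for set-valued risk measures for \emph{vectors} on the optional space, combined with the primal and dual equivalence established in Section~\ref{sec:equiv}. First I would recall that a closed conditionally convex risk measure for vectors $\bar\rho_t : \bar L_t^\infty(\bbr^d) \to \gcal(\bar M_t;\bar M_{t,+})$ admits the Feinstein--Rudloff dual representation~\cite{FR12,FR15-supermtg},
\begin{align*}
\bar\rho_t(\bar X) = \bigcap_{(\bar\Q,\bar w) \in \bar\W_t} \Big( \big(\bar\E_t^{\bar\Q}[-\bar X] + \bar\Gamma_t(\bar w)\big) \cap \bar M_t \;-^{\sbullet}\; \bar\alpha_t(\bar\Q,\bar w) \Big),
\end{align*}
with minimal penalty $\bar\alpha_t(\bar\Q,\bar w) = \bigcap_{\bar Y \in \bar A_t} \big(\bar\E_t^{\bar\Q}[-\bar Y] + \bar\Gamma_t(\bar w)\big) \cap \bar M_t$, the coherent case restricting the intersection to those dual variables that annihilate the acceptance set. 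By the equivalence of Section~\ref{sec:equiv}, $\rho_t$ is a closed conditionally convex (resp.\ coherent) risk measure for processes if and only if its augmented optional-space counterpart $\bar\rho_t$ is a closed conditionally convex (resp.\ coherent) risk measure for vectors, and the acceptance sets $A_t$ and $\bar A_t$ correspond; this reduces the claim to rewriting the displayed vector representation in the language of processes.

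The core of the argument is this rewriting. Using the decomposition $\bar\Q = \Q \otimes \psi$ of Theorem~\ref{thm:barQ} (in the normalized form of Remark~\ref{rem:psi}) together with the explicit conditional expectation of Corollary~\ref{cor:barQ-conditional}, I would expand $\bar\E_t^{\bar\Q}[-\bar X]$ into its time-indexed components: evaluated at $s \in \bbt_t$, the optional conditional expectation of the process reduces to a $\Q$-conditional expectation weighted by $\psi_s/(1 - \sum_{r<t}\psi_r)$. Pairing this with the per-time components $\bar w_s$ of the vector weight $\bar w$ yields a collection $(\Q_s,w_s)_{s\in\bbt_t}$ with $\Q_s \in \M_t(\P)^d$ and $w_s \in M_{t,+}^{\ast}\setminus M_t^\perp$, the normalizing density factors being absorbed into the redefinition of $w_s$; consequently each cone $\bar\Gamma_t(\bar w)\cap\bar M_t$ becomes, componentwise in time, $\big(\E_t^{\Q_s}[-X_s] + \Gamma_t(w_s)\big)\cap M_t$. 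Because an $\bar M_t$-valued set corresponds under the equivalence to a sum over $s\in\bbt_t$ of $M_t$-valued contributions, the single intersection-and-Minkowski-subtraction on $\bar M_t$ distributes into the sum $\sum_{s\in\bbt_t}(\cdots)$ of~\eqref{eq:dual-process}, and $\bar\alpha_t$ transforms into $\alpha_t$ verbatim.

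The main obstacle will be showing that the admissible dual set $\bar\W_t$ on the optional space maps bijectively onto $\W_t$, with the restriction $\bar\W_t^{\max}\leftrightarrow\W_t^{\max}$ in the coherent case. One must verify that the defining condition $w_t^s(\Q_s,w_s)\in L_s^1(\bbr^d_+)$ for all $s\in\bbt_t$ is precisely the image of optional admissibility $(\bar\Q,\bar w)\in\bar\W_t$, and that the degenerate strata where $\sum_{r<t}\psi_r = 1$ or $\E_t[d\Q/d\P]=0$ (handled through Remark~\ref{rem:psi}) contribute no spurious dual variables and fall out of the intersection. Care is also needed to confirm that a single optional pair $(\bar\Q,\bar w)$ on $\bar L^1(\bbr^d)$ genuinely decomposes into the $d\times(T-t+1)$-tuple of measures and $(T-t+1)$-tuple of weights indexing $\dcal_t$ --- this is exactly where the time-into-state-space encoding of the optional filtration carries the argument. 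Once the bijections $\bar\W_t\leftrightarrow\W_t$ and $\bar\W_t^{\max}\leftrightarrow\W_t^{\max}$ are established, both displayed formulas follow.
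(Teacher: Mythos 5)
Your proposal follows essentially the same route as the paper: apply the known vector dual representation on the optional filtration to the augmented risk measure $\bar R_t$ built via Theorem~\ref{thm:equiv-primal}, transport the penalty function through Lemma~\ref{lemma:equiv-dual}, and match the dual variable sets by the maps $W_t$ and $\bar W_t$. One minor refinement: the paper does not establish (or need) a bijection $\bar\W_t \leftrightarrow \W_t$, only that the two families induce the same set of tuples $(w_t^s(\Q_s,w_s))_{s\in\bbt_t}$, proved by mutual inclusion of images.
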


We conclude this section with a brief introduction to a time consistency notion, called multiportfolio time consistency, for these set-valued risk measures for processes.  This notion for risk measures for processes was introduced in~\cite{CH17}.  We will revisit this definition in Section~\ref{sec:mptc}.
\begin{definition}\label{defn:mptc-process}
A dynamic risk measure for processes $(\rho_t)_{t\in\bbt}$ is \textbf{\emph{multiportfolio time consistent}} if for any times $t < s \in \bbt$:
    \[\rho_s(X) \subseteq \bigcup_{Y \in B} \rho_s(Y) \; \Rightarrow \; \rho_t(Z\ind{[t,s)} + X\indn{\bbt_s}) \subseteq \bigcup_{Y \in B} \rho_t(Z\ind{[t,s)} + Y\indn{\bbt_s})\]
    for any $X \in \rcal_s^{\infty,d}$, $Z \in \rcal_t^{\infty,d}$, and $B \subseteq \rcal_s^{\infty,d}$.
\end{definition}

Conceptually, a risk measure is multiportfolio time consistent if a cumulative cash flow process $X$ is guaranteed (almost surely) to be more risky than some collection of cumulative cash flows $B$ at a time point in the future $s$ and if all of those cumulative cash flows are identical up to time $s$, then $X$ is also more risky at all earlier times.  This can most easily be understood when the collection $B$ is a singleton, i.e., when comparing two cumulative cash flows $X$ and $Y$ directly.

\subsection{Risk measures for vectors}\label{sec:rm-vector}
In this section we provide a quick overview of the definition of set-valued risk measures for random vectors as defined in~\cite{FR12,FR12b}.  Herein we present an axiomatic framework for such functions in Definition~\ref{defn:rm-vector} with emphasis on the special case on the optional filtration.  We then summarize prior results on the primal representation w.r.t.\ an acceptance set. We conclude this section with considerations for a dual representation for these conditional risk measures.

\begin{definition}\label{defn:rm-vector}
A function $\bar R_t: \bar L^{\infty}(\bbr^d) \to \ucal(\bar M_t;\bar M_{t,+})$  for $t \in \bbt$ is  called a \textbf{\emph{set-valued conditional risk measure for vectors}} on $\bar L^{\infty}(\bbr^d)$ if it satisfies the following properties for all $X, Y  \in \bar L^{\infty}(\bbr^d)$,
\begin{enumerate}
\item Cash invariant: for any $m \in \bar M_t$, 
    \[\bar R_t(X + m)=\bar R_t(X)-m;\]
\item  Monotone: $\bar R_t(X) \subseteq \bar R_t(Y)$ if $X \leq Y$ component-wise;
\item  Finite at zero: $\bar R_t(0) \neq \emptyset$ is closed and $\bar\fcal_t$-decomposable, and such that $\bar\P[\tilde{\bar R}_t(0) = M] = 0$ where $\tilde{\bar R}_t(0)$ is a $\bar\fcal_t$-measurable random set such that $\bar R_t(0) = \bar L_t^\infty(\tilde{\bar R}_t(0))$.
\end{enumerate}
A conditional risk measure for vectors at time $t \in \bbt$ is said to be:
\begin{itemize}
\item Normalized if $\bar R_t(X) = \bar R_t(X) + \bar R_t(0)$ for every $X \in \bar L^{\infty}(\bbr^d)$;
\item Time decomposable if
\begin{align*}
\bar R_t(X) = \sum_{s = 0}^{t-1} \bar R_t(X \ind{s})\ind{s} + \bar R_t(X \indn{\bbt_t})\indn{\bbt_t}
\end{align*}
for any $X \in \bar L^{\infty}(\bbr^d)$;
\item Conditionally convex if for all $X,Y \in \bar L^{\infty}(\bbr^d)$ and $\lambda \in \bar L_t^\infty([0,1])$
\begin{align*}
\bar R_t(\lambda X+(1-\lambda)Y) \supseteq \lambda \bar R_t(X)+(1-\lambda)\bar R_t(Y);
\end{align*}
\item Conditionally positive homogeneous if for all $X \in \bar L^{\infty}(\bbr^d)$ and $\lambda \in \bar L_t^\infty(\bbr_{++})$
\begin{align*}
\bar R_t(\lambda X)=\lambda\bar R_t(X);
\end{align*}
\item Conditionally coherent if it is conditionally convex and conditionally positive homogeneous;
\item Closed if the graph of $\bar R_t$
\begin{align*}
\operatorname{graph}\bar R_t := \{(X,u) \in \bar L^{\infty}(\bbr^d) \times \bar M_t \; : \; u \in \bar R_t(X)\}
\end{align*}
is closed in the product topology;
\item Conditionally convex upper continuous (c.u.c.) if
\begin{align*}
\bar R_t^-(D) := \{X \in \bar L^{\infty}(\bbr^d) \; : \; \bar R_t(X) \cap D \neq \emptyset\}
\end{align*}
is closed for any $\bar\fcal_t$-conditionally convex set $D \in \gcal(\bar M_t;-\bar M_{t,+})$.
\end{itemize}
A \textbf{\emph{dynamic risk measure for vectors}} is a sequence of conditional risk measures for vectors $(\bar R_t)_{t \in \bbt}$. A dynamic risk measure for vectors is said to have one of the above properties if $\bar R_t$ has the corresponding property for any $t\in\bbt$.
\end{definition}

\begin{remark}\label{rem:td}
We note that the time decomposable property presented above is, as far as the authors are aware, novel to this work. Time decomposability conceptually means that the time $s < t \in \bbt$ eligible assets required to compensate for the risk of the cumulative cash flows at time $s$ only depend on the realized portfolio at time $s$; this similarly is true for the eligible assets required to compensate for the random future cash flows at time $t$.
 Notably, by~\cite[Proposition 2.8]{FR12}, any conditionally convex risk measure is time decomposable.
%
\end{remark}

We now consider the primal representation of a set-valued conditional risk measure for vectors.  That is, for any risk measure, let the \emph{acceptance set} be defined as those claims that do not require the addition of any eligible asset in order to be ``acceptable'' to the risk manager:
\[\bar A_{\bar R_t}:=\{X \in \bar L^{\infty}(\bbr^d) \; : \; 0 \in \bar R_t(X)\}.\]
As shown in~\cite{FR12}, these acceptance sets uniquely define the risk measure via the relation
\[\bar R_t^{\bar A_t}(X) = \{m \in \bar M_t \; : \; X + m \in \bar A_t\}\]
for any $X \in \bar L^{\infty}(\bbr^d)$.
Furthermore, as with the risk measures for processes and as provided by~\cite[Remark 2]{FR12}, there is a one-to-one relation between the risk measure and its acceptance set so that
\begin{align*}
\bar R_t &= \bar R_t^{\bar A_{\bar R_t}} \quad \text{ and } \quad \bar A_t = \bar A_{\bar R_t^{\bar A_t}}.
\end{align*}
Throughout this work, we will take the risk measure-acceptance set pair $(\bar R_t,\bar A_t)$ without use of the explicit sub- or superscript notation.

We now provide a dual representation for risk measures for vectors.
For comparison to the dual representation presented above for risk measures for processes, we present the dual representation provided first in~\cite{FR15-supermtg}.
For this result, we first need to consider the set of dual variables
\[\bar\W_t := \left\{(\bar\Q,\bar w) \in \bar\M_t(\bar\P)^d \times (\bar M_{t,+}^{\ast}\setminus \bar M_t^\perp) \; : \; \bar w_t^T(\bar\Q,\bar w) \in \bar L^1(\bbr^d_+)\right\}.\]
As above, $(\bar\Q,\bar w) \in \bar\W_t$ implies $\bar w \geq 0 \; \bar\Q\mbox{-a.s.}$
\begin{corollary}\cite[Corollary 2.4]{FR15-supermtg}
\label{cor:dual-vector}
A function $\bar R_t: \bar L^{\infty}(\bbr^d) \to \gcal(\bar M_t;\bar M_{t,+})$ is a \textbf{\emph{closed conditionally convex risk measure}} if and only if
\begin{equation*}
\bar R_t(X) = \bigcap_{(\bar\Q,\bar w) \in \bar\W_t} \Big(\big(\bar\E_t^{\bar\Q}[-X] + \bar\Gamma_t(\bar w)\big) \cap \bar M_t -^{\sbullet} \bar\alpha_t(\bar\Q,\bar w)\Big),
\end{equation*}
where $\bar\alpha_t$ is the minimal conditional penalty function given by
\begin{equation*}
\bar\alpha_t(\bar\Q,\bar w) = \bigcap_{Y \in \bar A_t} \left(\bar\E_t^{\bar\Q}[-Y] + \bar\Gamma_t(\bar w)\right) \cap \bar M_t.
\end{equation*}
$\bar R_t$ is additionally \textbf{\emph{conditionally coherent}} if and only if
\begin{equation*}
\bar R_t(X) = \bigcap_{(\bar\Q,\bar w) \in \bar\W_t^{\max}} \left(\bar\E_t^{\bar\Q}[-X] + \bar\Gamma_t(\bar w)\right) \cap \bar M_t,
\end{equation*}
for
\begin{equation*}
\bar\W_t^{\max} = \left\{(\bar\Q,\bar w) \in \bar\W_t \; : \; \bar w_t^T(\bar\Q,\bar w) \in \bar A_t^{\ast}\right\}.
\end{equation*}
\end{corollary}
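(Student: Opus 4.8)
Since Corollary~\ref{cor:dual-vector} is quoted verbatim from~\cite[Corollary 2.4]{FR15-supermtg}, the most direct route is to invoke that result on the optional probability space $(\bar\Omega,\bar\fcal,(\bar\fcal_t)_{t\in\bbt},\bar\P)$; all of the structure required there --- the weak* dual pair $(\bar L^\infty(\bbr^d),\bar L^1(\bbr^d))$, the eligible subspace $\bar M_t$, and the cone $\bar M_{t,+}$ --- is in place by construction in Section~\ref{sec:rm-prelim}. Nevertheless, to be sure that nothing in the passage to the optional filtration breaks the standing hypotheses, the plan is to re-derive the representation via set-valued conjugation duality and to check each ingredient against the optional setup.

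The backbone of the argument is the set-valued Fenchel--Moreau theorem: a function $\bar R_t:\bar L^\infty(\bbr^d)\to\gcal(\bar M_t;\bar M_{t,+})$ that is closed and conditionally convex equals its biconjugate. First I would express $\bar R_t$ as the intersection, over continuous linear functionals, of the weak*-closed halfspace-valued maps they generate; on $(\bar L^\infty(\bbr^d),\bar L^1(\bbr^d))$ every such functional is represented by some $\bar w\in\bar L^1(\bbr^d)$, and the associated halfspace is exactly $\bar\Gamma_t(\bar w)$. Monotonicity of $\bar R_t$ forces the relevant $\bar w$ to be nonnegative $\bar\Q$-a.s., and after normalising $\bar w$ into a probability density one identifies the pair $(\bar\Q,\bar w)$ with the requirement $\bar w_t^T(\bar\Q,\bar w)\in\bar L^1(\bbr^d_+)$, i.e.\ $(\bar\Q,\bar w)\in\bar\W_t$. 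The linear part of the conjugate then collapses, via the $\bar\P$-a.s.\ version of the conditional expectation, to $(\bar\E_t^{\bar\Q}[-X]+\bar\Gamma_t(\bar w))\cap\bar M_t$.

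Next I would compute the penalty term. Using cash invariance one shows that the conjugate of $\bar R_t$ depends only on the acceptance set $\bar A_t$, so the minimal penalty is the pointwise (set) infimum of the linear functional over all acceptable claims, namely $\bar\alpha_t(\bar\Q,\bar w)=\bigcap_{Y\in\bar A_t}(\bar\E_t^{\bar\Q}[-Y]+\bar\Gamma_t(\bar w))\cap\bar M_t$; subtracting this penalty via the Minkowski difference $-^{\sbullet}$ yields the stated representation. For the coherent case, conditional positive homogeneity forces $\bar\alpha_t(\bar\Q,\bar w)$ to be either all of $\bar M_t$ or the empty set, and the nonempty (hence effective) dual variables are exactly those with $\bar w_t^T(\bar\Q,\bar w)\in\bar A_t^{\ast}$; restricting the intersection to $\bar\W_t^{\max}$ then removes the penalty term and gives the coherent formula.

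The main obstacle I anticipate is not the abstract duality but the bookkeeping that ties abstract dual elements to the concrete pairs $(\bar\Q,\bar w)$ on the optional $\sigma$-algebra: one must verify that the normalisation of $\bar w\in\bar L^1(\bbr^d)$ into a measure $\bar\Q\in\bar\M_t(\bar\P)^d$ together with a density is consistent with the decomposition $\bar\Q=\Q\otimes\psi$ of Theorem~\ref{thm:barQ} and with the conditional-expectation formula of Corollary~\ref{cor:barQ-conditional}, and that the conditional ($\bar\fcal_t$-local) structure --- decomposability and the restriction to $\bar\fcal_t$-conditionally convex test sets $D\in\gcal(\bar M_t;-\bar M_{t,+})$ in the c.u.c.\ definition --- is preserved. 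Since all of these objects are precisely the ones carried over from~\cite{FR15-supermtg} to the optional setting, the verification is routine, and the cleanest write-up is simply to cite~\cite[Corollary 2.4]{FR15-supermtg} after noting that its standing assumptions hold on $(\bar\Omega,\bar\fcal,(\bar\fcal_t)_{t\in\bbt},\bar\P)$.
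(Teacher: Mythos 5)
Your proposal is correct and matches the paper's approach: the paper offers no independent proof of Corollary~\ref{cor:dual-vector} but simply imports it as \cite[Corollary 2.4]{FR15-supermtg}, applied on the optional probability space $(\bar\Omega,\bar\fcal,(\bar\fcal_t)_{t\in\bbt},\bar\P)$ where the dual pair $(\bar L^\infty(\bbr^d),\bar L^1(\bbr^d))$, the eligible subspace $\bar M_t$, and the cone $\bar M_{t,+}$ satisfy the standing hypotheses of that reference. Your sketch of the underlying set-valued Fenchel--Moreau argument is a faithful outline of what that cited result contains, but the citation itself is the proof here.
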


We conclude this introduction to risk measures for vectors with a consideration of multiportfolio time consistency.
This notion for risk measures for vectors is a modification of that introduced in~\cite{FR12,FR12b}; we demonstrate in Proposition~\ref{prop:mptc-vector} that this coincides with the typical definition for time decomposable risk measures.
\begin{definition}\label{defn:mptc-vector}
A dynamic risk measure for vectors $(\bar R_t)_{t\in\bbt}$ is \textbf{\emph{multiportfolio time consistent}} if for any times $t < s \in \bbt$:
    \[\bar R_s(X) \subseteq \bigcup_{Y \in B} \bar R_s(Y) \; \Rightarrow \; \bar R_t(X) \subseteq \bigcup_{Y \in B} \bar R_t(Y)\]
    for any $X \in \bar L^\infty$ and $B := \{(b_0,b_1,\dots,b_{s-1},b_s) \; : \; b_r \in B_r, r < s, \; b_s \in B_s\}$ for any arbitrary sequence of sets $B_r \subseteq L_r^\infty(\bbr^d)$ for $r < s$ and $B_s \subseteq \rcal^{\infty,d}_s$.
\end{definition}

As for risk measures for processes, conceptually, multiportfolio time consistency allows for the comparison of a single (cumulative) cash flow with a collection of portfolios over time in which a guaranteed ordering must hold backwards in time.

\begin{proposition}\label{prop:mptc-vector}
Let $(\bar R_t)_{t\in\bbt}$ be a normalized, time decomposable risk measure.  $(\bar R_t)_{t\in\bbt}$ is multiportfolio time consistent if and only if for any times $t < s \in \bbt$:
    \[\bar R_s(X) \subseteq \bigcup_{Y \in B} \bar R_s(Y) \; \Rightarrow \; \bar R_t(X) \subseteq \bigcup_{Y \in B} \bar R_t(Y)\]
    for any $X \in \bar L^{\infty}(\bbr^d)$ and $B \subseteq \bar L^{\infty}(\bbr^d)$.
\end{proposition}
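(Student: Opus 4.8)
The reverse implication is immediate: any product-form set as in Definition~\ref{defn:mptc-vector} is a subset of $\bar L^\infty(\bbr^d)$, so the displayed implication specializes to it. For the forward implication I would fix $t<s\in\bbt$, $X\in\bar L^\infty(\bbr^d)$, and an arbitrary $B\subseteq\bar L^\infty(\bbr^d)$ with $\bar R_s(X)\subseteq\bigcup_{Y\in B}\bar R_s(Y)$, and manufacture from $B$ a product-form set to which Definition~\ref{defn:mptc-vector} applies. The structural fact I would establish first is that, by time decomposability, $\bar R_s$ splits across the time-slices $\{0\},\dots,\{s-1\},\bbt_s$, and that on each \emph{past} slice $r<s$ cash invariance acts as a pure translation: taking $m=c\ind{r}\in\bar M_s$ (admissible since $r<s$) in the cash-invariance axiom and reading off the $r$-th slice of the decomposition gives $\bar R_s((v+c)\ind{r})\ind{r}=\bar R_s(v\ind{r})\ind{r}-c\ind{r}$. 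Consequently each past slice $\bar R_s(Y\ind{r})\ind{r}$ is the single-apex translate of the fixed set $\bar R_s(0)\ind{r}$, while only the future slice $\bar R_s(Y\indn{\bbt_s})\indn{\bbt_s}$ is left unconstrained.

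Next I would use the single-apex structure to extract from $B$ a subfamily dominating $X$ throughout the past. Because $\bar R_s(X)$ is the product of its single-apex past slices with its future slice, its apex in the past coordinates — the point whose $r$-th slice is $-X_r$ — lies in $\bar R_s(X)$ (this uses $0\in\bar R_s(0)$). Pairing this past apex with any $c$ in the future slice of $\bar R_s(X)$ and invoking the covering hypothesis produces some $Y\in B$ whose $r$-th past slice contains $-X_r$, i.e.\ $\bar R_s(X\ind{r})\ind{r}\subseteq\bar R_s(Y\ind{r})\ind{r}$ for every $r<s$, and whose future slice contains $c$. Letting $\hat B:=\{Y\in B:\ \bar R_s(X\ind{r})\ind{r}\subseteq\bar R_s(Y\ind{r})\ind{r}\ \text{for all}\ r<s\}$, this shows that $\hat B$ still covers the future slice:
\[\bar R_s(X\indn{\bbt_s})\indn{\bbt_s}\subseteq\bigcup_{Y\in\hat B}\bar R_s(Y\indn{\bbt_s})\indn{\bbt_s}.\]

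I would then freeze the past and apply Definition~\ref{defn:mptc-vector}. Set $B^\ast:=\{(X-X\indn{\bbt_s})+Y\indn{\bbt_s}:\ Y\in\hat B\}$, which has exactly the product form required, with $r$-th component set the singleton $\{X_r\}$ for $r<s$ and future component set $\{Y\indn{\bbt_s}:Y\in\hat B\}\subseteq\rcal_s^{\infty,d}$. The elements of $B^\ast$ share the past slices of $X$ and, by the previous display, cover its future slice, so $\bar R_s(X)\subseteq\bigcup_{Z\in B^\ast}\bar R_s(Z)$ and Definition~\ref{defn:mptc-vector} gives $\bar R_t(X)\subseteq\bigcup_{Y\in\hat B}\bar R_t\bigl((X-X\indn{\bbt_s})+Y\indn{\bbt_s}\bigr)$. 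Finally, for each $Y\in\hat B$ the vector $(X-X\indn{\bbt_s})+Y\indn{\bbt_s}$ coincides with $Y$ in the future and is dominated by $Y$ on every past slice, so $\bar R_s\bigl((X-X\indn{\bbt_s})+Y\indn{\bbt_s}\bigr)\subseteq\bar R_s(Y)$; the singleton instance of Definition~\ref{defn:mptc-vector} (namely $\bar R_s(\,\cdot\,)\subseteq\bar R_s(Y)\Rightarrow\bar R_t(\,\cdot\,)\subseteq\bar R_t(Y)$) upgrades this to $\bar R_t\bigl((X-X\indn{\bbt_s})+Y\indn{\bbt_s}\bigr)\subseteq\bar R_t(Y)$. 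Chaining the inclusions yields $\bar R_t(X)\subseteq\bigcup_{Y\in\hat B}\bar R_t(Y)\subseteq\bigcup_{Y\in B}\bar R_t(Y)$, which is the desired conclusion.

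The step I expect to be the main obstacle is the extraction of the dominating subfamily $\hat B$: turning a \emph{set}-level covering at time $s$ into honest slice-by-slice domination by individual members of $B$. This is precisely where normalization and time decomposability are indispensable — together they force each past slice to be a translate of the single set $\bar R_s(0)$, so that covering the product set is governed by covering its apex, and covering the apex forces domination; a covering by general, non-single-apex sets would give no such reduction. The only delicate technical point is the claim that the past apex genuinely lies in $\bar R_s(X)$, i.e.\ that $0\in\bar R_s(0)$; this is the standard normalization convention and holds in particular whenever $\bar R_s(0)\supseteq\bar M_{s,+}$.
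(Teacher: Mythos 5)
Your reverse implication is fine, but the forward direction has a genuine gap at the very first structural step. You claim that for $r<s$ each past slice $\bar R_s(Y\ind{r})\ind{r}$ is a single-apex translate of $\bar R_s(0)\ind{r}$ by $-Y_r$, obtained by applying cash invariance with $m=c\ind{r}$. Admissibility of that translation is not ``$r<s$'' but ``$c\in\bar M_s$'', i.e.\ $c$ must take values in the eligible subspace $M=\bbr^m\times\{0\}^{d-m}$. When $M\subsetneq\bbr^d$ (the case this paper is actually built for; see Corollary~\ref{cor:equiv-primal-full} for the contrast with $M=\bbr^d$), a general $Y_r\in L_r^\infty(\bbr^d)$ cannot be absorbed by cash invariance, so the restricted slice maps $X_r\mapsto\bar R_s(X_r\ind{r})_r$ are genuinely nontrivial risk measures and not translates of a fixed set. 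Worse, your ``past apex'' $-X_r$ need not lie in $M_r$ at all, whereas $\bar R_s(X)_r\subseteq M_r$ by definition of the codomain, so the apex is typically not an element of $\bar R_s(X)$ and the extraction of the dominating subfamily $\hat B$ collapses. Your argument is essentially correct only in the special case $M=\bbr^d$ (granting also $0\in\bar R_s(0)$, which the paper's normalization axiom $\bar R_s(0)=\bar R_s(0)+\bar R_s(0)$ does not literally assert).

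The paper avoids all of this by routing both versions of the property through the recursive relation $\bar R_t(X)=\bigcup_{Z\in\bar R_s(X)}\bar R_t(-Z)$. The point is that the only set one ever needs to feed into Definition~\ref{defn:mptc-vector} is $B=-\bar R_s(X)$, which is automatically of the required product form by time decomposability ($\bar R_s(X)=\sum_{r<s}\bar R_s(X\ind{r})\ind{r}+\bar R_s(X\indn{\bbt_s})\indn{\bbt_s}$ is a Minkowski sum of slice-supported sets) and is a subset of $\bar M_s$, so cash invariance applies legitimately to its elements; normalization gives $\bar R_s(X)=\bar R_s(X)+\bar R_s(0)=\bigcup_{Z\in\bar R_s(X)}\bar R_s(-Z)$, whence the recursion. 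From the recursion the arbitrary-$B$ implication follows by a one-line union manipulation: $\bar R_t(X)=\bigcup_{Z\in\bar R_s(X)}\bar R_t(-Z)\subseteq\bigcup_{Y\in B}\bigcup_{Z\in\bar R_s(Y)}\bar R_t(-Z)=\bigcup_{Y\in B}\bar R_t(Y)$. If you want a self-contained proof, I would rebuild it along these lines rather than trying to repair the apex construction.
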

\begin{proof}
This result follows trivially via the application of the recursive relation $\bar R_t(X) = \bigcup_{Z \in \bar R_s(X)} \bar R_t(-Z)$ for any times $t < s$ and $X \in \bar L^{\infty}(\bbr^d)$ which can be shown to be equivalent to multiportfolio time consistency as given in Definition~\ref{defn:mptc-vector} and the more general version provided in the statement of this proposition. 
\end{proof}

\begin{remark}\label{rem:Rt}
Throughout much of the remainder of this work, the restriction of set-valued risk measures for vectors to a sub-$\sigma$-algebra is utilized.  Herein we will define $R_t: L_t^{\infty}(\bbr^d) \to \ucal(M_t;M_{t,+})$ to be the restriction of a set-valued risk measure (with filtered probability space $(\Omega,\fcal,(\fcal_s)_{s \in \bbt},\P)$) to $L_t^{\infty}(\bbr^d)$.
The primal and dual representation of such risk measures will be utilized extensively below defined in analogous ways to those provided above; for closed and convex risk measures -- as are utilized for the dual representation -- the codomain of $R_t$ is $\gcal(M_t;M_{t,+}) \subseteq \ucal(M_t;M_{t,+})$.
Notably, the penalty function $\alpha_{R_t}$ for these restricted risk measures only depends on a single dual variable $w \in M_{t,+}^{\ast}$, i.e., $\alpha_{R_t}(w)$ does \emph{not} depend on a vector of probability measures $\Q \in \M_t(\P)^d$.
\end{remark}

\section{Equivalence of risk measures for processes and vectors}\label{sec:equiv}

In this section we will focus on the relation between set-valued risk measures for processes and vectors.  This was studied under restrictive conditions in \cite{CH17} which match many of the same properties of the scalar setting (i.e., with a full space of eligible assets $M = \bbr^d$ and with the strong normalization property $\rho_t(0) = L_t^{\infty}(\bbr^d_+)$).
Herein we drop these restrictions and determine the equivalence for both the primal and dual representations.  Notably, this equivalence requires the introduction of an augmentation of the risk measure for processes with a series of risk measures for vectors on the original filtered probability space; when the aforementioned strong assumptions are imposed these augmenting risk measures can be dropped without incident as detailed in Corollary~\ref{cor:equiv-primal-full}.

\subsection{Relation for primal representations}\label{sec:equiv-primal}
First, we compare the relation between set-valued risk measures for processes and for vectors through their axiomatic definitions.  This is akin to the results presented in~\cite{acciaio2012risk} for scalar risk measures.  Notably, in comparison to that work, the set-valued risk measures for vectors in the optional filtration are a larger class than those for processes rather than the equivalence as found for scalar risk measures.  In particular, we find risk measures for vectors on the optional filtration are equivalent to the risk measures for processes augmented with the restricted set-valued risk measures for vectors which are discussed in Remark~\ref{rem:Rt}.  Intuitively, this augmentation is necessary to account for the possible risk prior to the current time that cannot trivially be accounted for by eligible assets; that is, the augmentation is required for the $M \neq \bbr^d$ setting.  In the full eligible asset setting $M = \bbr^d$, the equivalence of risk measures for processes and vectors no longer requires an augmentation by restricted set-valued risk measures as is detailed in Corollary~\ref{cor:equiv-primal-full} below.
\begin{theorem}\label{thm:equiv-primal}
Fix time $t \in \bbt$.
\begin{enumerate}
\item\label{thm:equiv-primal-1} Any conditional risk measure for processes $\rho_t: \rcal_t^{\infty,d} \to \ucal(M_t;M_{t,+})$ and series of conditional risk measures for vectors $R_s: L_s^{\infty}(\bbr^d)  \to \ucal(M_s;M_{s,+})$ (restricted to $\fcal_s$-measurable random vectors), $s=0, 1, \dots, t-1$, define a time decomposable conditional risk measure for random vectors on the optional filtration $\bar R_t: \bar L^{\infty}(\bbr^d) \to \ucal(\bar M_t;\bar M_{t,+})$ via
\begin{align}
\label{eq:primal-barR} \bar R_t(X) &:= \sum_{s = 0}^{t-1} R_s(X_s)\ind{s} + \rho_t(\pi_{t,T}(X))\indn{\bbt_t}.
\end{align}
Additionally, if $(R_s)_{s = 0}^{t-1},\rho_t$ are normalized, conditionally convex, or conditionally coherent, then the corresponding risk measure $\bar R_t$ has the same property.

\item\label{thm:equiv-primal-2} Any conditional risk measure for random vectors on the optional filtration $\bar R_t: \bar L^{\infty}(\bbr^d) \to \ucal(\bar M_t;\bar M_{t,+})$ defines a conditional risk measure for processes $\rho_t: \rcal_t^{\infty,d} \to \ucal(M_t;M_{t,+})$ and a series of conditional risk measures $R_s: L_s^{\infty}(\bbr^d) \to \ucal(M_s;M_{s,+})$ restricted to $\fcal_s$-measurable random vectors, $s = 0,1,\dots,t-1$,  via
\begin{align}
\label{eq:primal-rho} \rho_t(X) &:= \bar R_t(X\indn{\bbt_t})_t\\
\label{eq:primal-R} R_s(Z) &:= \bar R_t(Z \ind{s})_s, \quad s = 0,1,\dots,t-1.
\end{align}
Additionally, if $\bar R_t$ is normalized, conditionally convex, or conditionally coherent, then the corresponding risk measures $(R_s)_{s = 0}^{t-1},\rho_t$ have the same property.

\item\label{thm:equiv-primal-3} If $\bar R_t$ is time decomposable then it can be decomposed into form \eqref{eq:primal-barR} such that $\rho_t$ and $(R_s)_{s = 0}^{t-1}$ are defined as in \eqref{eq:primal-rho} and \eqref{eq:primal-R} respectively.  Additionally, if $\bar R_t$ is conditionally convex and either closed or conditionally convex upper continuous then $\rho_t$ and $(R_s)_{s = 0}^{t-1}$ have the same property and vice versa.
\end{enumerate}
\end{theorem}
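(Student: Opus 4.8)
The plan is to dispatch parts \ref{thm:equiv-primal-1} and \ref{thm:equiv-primal-2} as componentwise axiom checks, reserving the only genuine work for the topological transfer in part \ref{thm:equiv-primal-3}. The structural facts I would use throughout are that any $u \in \bar M_t$ is constant after $t$ (so $u_r = u_t$ for $r > t$, whence $u\indn{\bbt_t} = u_t\indn{\bbt_t}$) and that every $X \in \bar L^\infty(\bbr^d)$ decomposes as $X = \sum_{s=0}^{t-1} X_s\ind{s} + X\indn{\bbt_t}$ with $X\ind{s} = X_s\ind{s}$. These let me pass freely between an element of $\bar M_t$ and the tuple $(u_0,\dots,u_{t-1},u_t)$ of its components.

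For part \ref{thm:equiv-primal-1}, I would verify the three axioms of Definition~\ref{defn:rm-vector} for $\bar R_t$ slotwise. Cash invariance reduces to $\pi_{t,T}(X + m) = \pi_{t,T}(X) + m_t\indn{\bbt_t}$ for $m \in \bar M_t$ (using $m_r = m_t$ for $r \geq t$) together with cash invariance of each $R_s$ and of $\rho_t$; monotonicity is immediate from the componentwise order; and \emph{finite at zero} holds because $\bar R_t(0) = \sum_{s<t} R_s(0)\ind{s} + \rho_t(0)\indn{\bbt_t}$ inherits nonemptiness, closedness and $\bar\fcal_t$-decomposability from the pieces, while $\bar\P[\tilde{\bar R}_t(0) = M] = \E[\sum_s \mu_s\indn{\{\tilde{\bar R}_t(0)(\cdot,s)=M\}}] = 0$ since for $s<t$ the fiber is $\tilde R_s(0)$ and for $s\geq t$ it is $\tilde\rho_t(0)$, each $\P$-null. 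Time decomposability is then a direct computation: for $s<t$ the projection kills $X\ind{s}$, giving $\bar R_t(X\ind{s})\ind{s} = R_s(X_s)\ind{s}$, while $\bar R_t(X\indn{\bbt_t})\indn{\bbt_t} = \rho_t(\pi_{t,T}(X))\indn{\bbt_t}$, and their sum is \eqref{eq:primal-barR}. Normalization, conditional convexity and conditional positive homogeneity transfer slotwise and are immediate.

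For part \ref{thm:equiv-primal-2}, I would run the reverse verifications by lifting an argument into $\bar L^\infty(\bbr^d)$ and reading off one component. Given $X \in \rcal_t^{\infty,d}$, the lift $X\indn{\bbt_t}$ lies in $\bar L^\infty(\bbr^d)$ and $\rho_t(X) = \bar R_t(X\indn{\bbt_t})_t \subseteq M_t$; likewise $R_s(Z) = \bar R_t(Z\ind{s})_s \subseteq M_s$. Cash invariance follows because $m\indn{\bbt_t} \in \bar M_t$ for $m \in M_t$ and $m\ind{s} \in \bar M_t$ for $m \in M_s$, so applying cash invariance of $\bar R_t$ and extracting the $t$-th (resp.\ $s$-th) component yields $\rho_t(X + m\indn{\bbt_t}) = \rho_t(X) - m$ and $R_s(Z + m) = R_s(Z) - m$; monotonicity, finiteness at zero, and the structural properties again descend to components.

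Part \ref{thm:equiv-primal-3} has two halves. The first — that time decomposability forces the form \eqref{eq:primal-barR} with $\rho_t,(R_s)_{s=0}^{t-1}$ given by \eqref{eq:primal-rho},\eqref{eq:primal-R} — is exactly the computation of part \ref{thm:equiv-primal-1} read in reverse: starting from $\bar R_t(X) = \sum_{s<t}\bar R_t(X\ind{s})\ind{s} + \bar R_t(X\indn{\bbt_t})\indn{\bbt_t}$ and simplifying each summand via $X\ind{s} = X_s\ind{s}$ and $u\indn{\bbt_t} = u_t\indn{\bbt_t}$ for $u \in \bar M_t$. The second half — transferring closedness and c.u.c.\ between $\bar R_t$ and $(\rho_t,(R_s)_{s=0}^{t-1})$ under conditional convexity — is where I expect the main obstacle, and is precisely what Lemma~\ref{lemma:convergence} is built for. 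Under the decomposition \eqref{eq:primal-barR}, membership $u \in \bar R_t(X)$ is determined slotwise ($u_s \in R_s(X_s)$ for $s<t$ and $u_t \in \rho_t(\pi_{t,T}(X))$), so $\operatorname{graph}\bar R_t$ factorizes across slots; since $\sigma(\bar L^\infty(\bbr^d),\bar L^1(\bbr^d))$-convergence is, by Lemma~\ref{lemma:convergence}, equivalent to simultaneous $\sigma(L_s^\infty(\bbr^d),L_s^1(\bbr^d))$-convergence of the components $s<t$ and $\sigma(\rcal_t^{\infty,d},\acal_t^{1,d})$-convergence of the $\indn{\bbt_t}$-tail, closedness of $\operatorname{graph}\bar R_t$ is equivalent to joint closedness of the factor graphs. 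To avoid the delicate fact that a general test set $D \in \gcal(\bar M_t;-\bar M_{t,+})$ need not be a product, I would first invoke the equivalence of closedness and c.u.c.\ under conditional convexity in each framework, reducing the c.u.c.\ claim to the clean closedness transfer above; the remaining burden is then purely the bookkeeping that the weak* topologies match up slotwise and that the projected closure sets remain in $\gcal(M_s;-M_{s,+})$ and $\gcal(M_t;-M_{t,+})$, both supplied by Lemma~\ref{lemma:convergence} and the componentwise definitions rather than by any new estimate.
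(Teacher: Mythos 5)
Your parts \ref{thm:equiv-primal-1} and \ref{thm:equiv-primal-2}, the time-decomposability computation, and the closedness transfer in part \ref{thm:equiv-primal-3} all match the paper's route: the paper likewise treats the axiom checks as routine slotwise verifications (flagging only that finiteness at zero in part \ref{thm:equiv-primal-2} needs $\P[\min_{t\in\bbt}\mu_t>\epsilon]=1$, which your $\bar\P[\tilde{\bar R}_t(0)=M]=\E[\sum_s\mu_s\indn{\{\cdot\}}]$ identity supplies), and it proves the closedness equivalence exactly by factorizing $\operatorname{graph}\bar R_t = \sum_{s<t}(\operatorname{graph} R_s)\ind{s} + (\operatorname{graph}\rho_t)\indn{\bbt_t}$ and applying Lemma~\ref{lemma:convergence} (with the one bookkeeping device you omit: lifting a net from $\operatorname{graph} R_s$ to $\operatorname{graph}\bar R_t$ by padding the other slots with a fixed $m^0\in\bar R_t(0)$).

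The gap is in your treatment of conditional convex upper continuity. You propose to ``invoke the equivalence of closedness and c.u.c.\ under conditional convexity'' so as to reduce the c.u.c.\ claim to the closedness transfer. No such equivalence is available: for set-valued conditionally convex functions, c.u.c.\ implies closedness but the converse fails in general, which is precisely why the paper lists them as distinct properties and proves the transfer for each separately. Moreover, the obstacle you are trying to sidestep is not actually there: the definition of c.u.c.\ only quantifies over $\bar\fcal_t$-\emph{conditionally convex} sets $D\in\gcal(\bar M_t;-\bar M_{t,+})$, and since $\Omega\times\{0\},\dots,\Omega\times\{t-1\},\Omega\times\bbt_t$ is an $\bar\fcal_t$-measurable partition of $\bar\Omega$, every such $D$ is $\bar\fcal_t$-decomposable and hence automatically of product form $D=\sum_{s<t}D_s\ind{s}+D_t\indn{\bbt_t}$ with $D_s\in\gcal(M_s;-M_{s,+})$. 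This gives $\bar R_t^-(D)=\sum_{s<t}R_s^-(D_s)\ind{s}+\rho_t^-(D_t)\indn{\bbt_t}$ (and conversely $R_s^-(D_s)=\bar R_t^-(D_s\ind{s}+\bar M_t\indn{\{s\}^c})_s$, $\rho_t^-(D_t)=\bar R_t^-(\sum_{s<t}M_s\ind{s}+D_t\indn{\bbt_t})\indn{\bbt_t}$), after which the c.u.c.\ transfer runs through Lemma~\ref{lemma:convergence} exactly as the closedness one does. Replace the claimed equivalence with this decomposition of the test sets and your argument closes.
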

\begin{proof}
\begin{enumerate}
\item It is easy to check that $\bar R_t$ defined in \eqref{eq:primal-barR} is a time decomposable conditional risk measure on the optional filtration as defined in Definition~\ref{defn:rm-vector}.  This is done directly by considering the equivalent definitions for the risk measure for processes and the sequence of conditional risk measures. Due to simplicity, we omit the direct constructions herein. 

\item It is easy to check that $\rho_t$ defined in \eqref{eq:primal-rho} is a risk measure for processes and $R_s$, for $s < t$, defined in \eqref{eq:primal-R} is a conditional risk measure for vectors on $L_s^{\infty}(\bbr^d)$.  Notably, for $\rho_t$, the result follows directly from \cite[Proposition 3.4]{CH17} with only verification required as we no longer require the full eligible space, i.e., $M \subseteq \bbr^d$; due to simplicity of those results, we omit those proofs.  However, due to the updated definition for finiteness at zero, that result needs to be proven for $\rho_t$ separately from the approach of \cite{CH17} but still follows trivially by construction (recalling that $\P[\min_{t \in \bbt} \mu_t > \epsilon] = 1$ for some $\epsilon > 0$) and thus we omit the details.

\item If $\bar R_t$ is time decomposable and $X \in \bar L^{\infty}(\bbr^d)$ then:
    \begin{align*}
    \bar R_t(X) &= \sum_{s = 0}^{t-1} \bar R_t(X\ind{s})\ind{s} + \bar R_t(X\indn{\bbt_t})\indn{\bbt_t}\\
    &= \sum_{s = 0}^{t-1} R_s(X_s)\ind{s} + \rho_t(\pi_{t,T}(X))\indn{\bbt_t}.
    \end{align*}

Finally, we want to study the closedness properties for conditionally convex risk measures.
    \begin{enumerate}
    \item \textbf{Closed}:
        \begin{itemize}
        \item Assume $\bar R_t$ is closed.
            First, we will show that $R_s$ is closed for any $s < t$.
            \begin{align*}
            \operatorname{graph} R_s &= \{(X_s,m_s) \in L_s^{\infty}(\bbr^d) \times M_s \; : \; m_s \in R_s(X_s)\}\\
                &= \{(X_s,m_s) \in L_s^{\infty}(\bbr^d) \times M_s \; : \; m_s \in \bar R_t(X_s \ind{s})_s\}\\
                &= \{(X,m) \in \bar L^{\infty}(\bbr^d) \times \bar M_t \; : \; m \in \bar R_t(X)\}_s\\
                &= (\operatorname{graph} \bar R_t)_s.
            \end{align*}
            Let $(X_s^{n},m_s^{n})_{n \in N} \subseteq \operatorname{graph} R_s \to (X_s,m_s)$ in $\sigma(L_s^{\infty}(\bbr^d),L_s^1(\bbr^d))$ be a convergent net indexed by the set $N$.  By Lemma~\ref{lemma:convergence} and assumption, it must follow that $(X_s^{n}\ind{s},m_s^{n}\ind{s} + m^0\indn{\{s\}^c}) \to (X_s \ind{s},m_s \ind{s} + m^0\indn{\{s\}^c}) \in \operatorname{graph} \bar R_t$ for any arbitrary $m^0 \in \bar R_t(0)$.  As $(X_s,m_s) \in \operatorname{graph} R_s$, closedness is proven.
            %
            Consider now the closedness of $\rho_t$.  By the same logic as above,
            \begin{align*}
            \operatorname{graph} \rho_t &= \{(X\indn{\bbt_t},m_t) \; : \; (X,m) \in \operatorname{graph} \bar R_t\}.
            \end{align*}
            Therefore, as above by concatenation with $(0,m^0) \in \operatorname{graph} \bar R_t$, $\operatorname{graph} \rho_t$ is closed.
        \item Assume $\rho_t$ and $(R_s)_{s = 0}^{t-1}$ are all closed risk measures.
            \begin{align*}
            \operatorname{graph} \bar R_t &= \{(X,m) \in \bar L^{\infty}(\bbr^d) \times \bar M_t \; : \; m \in \bar R_t(X)\}\\
                &= \left\{(X,m) \in \bar L^{\infty}(\bbr^d) \times \bar M_t \; : \; \begin{array}{l} m_s \in R_s(X_s) \; \forall s < t, \\ m_t \in \rho_t(\pi_{t,T}(X))\end{array}\right\}\\
                &= \sum_{s = 0}^{t-1} (\operatorname{graph} R_s) \ind{s} + (\operatorname{graph} \rho_t) \indn{\bbt_t}.
            \end{align*}
            Let $(X^{n},m^{n}) \subseteq \operatorname{graph} \bar R_t \to (X,m)$ in the product topology.  By the above construction and Lemma~\ref{lemma:convergence}, it must follow that $(X_s,m_s) \in \operatorname{graph} R_s$ for every time $s < t$ and $(X\indn{\bbt_t},m_t) \in \operatorname{graph} \rho_t$.  Therefore closedness is proven as $(X,m) \in \operatorname{graph} \bar R_t$.
        \end{itemize}
    \item \textbf{Conditionally convex upper continuous}:
        \begin{itemize}
        \item Assume $\bar R_t$ is conditionally convex upper continuous.
            First we will show that $R_s$ is conditionally convex upper continuous.  Let $D_s \in \gcal(M_s;-M_{s,+})$ be closed and conditionally convex.
            \begin{align*}
            R_s^-(D_s) &= \{X_s \in L_s^{\infty}(\bbr^d) \; : \; R_s(X_s) \cap D_s \neq \emptyset\}\\
                &= \{X_s \in L_s^{\infty}(\bbr^d) \; : \; \bar R_t(X_s \ind{s})_s \cap D_s \neq \emptyset\}\\
                &= \{X \in \bar L^{\infty}(\bbr^d) \; : \; \bar R_t(X)_s \cap D_s \neq \emptyset\}_s\\
                &= \bar R_t^-(D_s \ind{s} + \bar M_t \indn{\{s\}^c})_s.
            \end{align*}
            Let $X_s^{n} \subseteq R_s^-(D_s) \to X_s$ in $\sigma(L_s^{\infty}(\bbr^d),L_s^1(\bbr^d))$.  By Lemma~\ref{lemma:convergence} and assumption, it must follow that $X_s^{n} \ind{s} \to X_s \ind{s} \in \bar R_t^-(D_s \ind{s} + \bar M_t \indn{\{s\}^c})$.  Therefore conditional convex upper continuity is proven as $X_s \in R_s^-(D_s)$.
            Consider now the conditional convex upper continuity of $\rho_t$.  Let $D_t \in \gcal(M_t;-M_{t,+})$ be closed and conditionally convex.  By the same logic as above,
            \begin{align*}
            \rho_t^-(D_t) &= \bar R_t^-(\sum_{s = 0}^{t-1} M_s \ind{s} + D_t \indn{\bbt_t}) \indn{\bbt_t}.
            \end{align*}
            Therefore, as above, $\rho_t^-(D_t)$ must be closed and the result follows.
        \item Assume $\rho_t$ and $(R_s)_{s = 0}^{t-1}$ are all conditionally convex upper continuous.
            Let $D \in \gcal(\bar M_t;-\bar M_{t,+})$ be closed and conditionally convex.  By decomposability in the optional filtration, $D = \sum_{s = 0}^{t-1} D_s \ind{s} + D_t \indn{\bbt_t}$ for $D_s \in \gcal(M_s;-M_{s,+})$ closed and conditionally convex for every time $s \leq t$. Hence,
            \begin{align*}
            \bar R_t^-(D) &= \{X \in \bar L^{\infty}(\bbr^d) \; : \; \bar R_t(X) \cap D \neq \emptyset\}\\
                &= \{X \in \bar L^{\infty}(\bbr^d) \; : \; \bar R_t(X)_s \cap D_s \neq \emptyset \; \forall s \leq t\}\\
                &= \{X \in \bar L^{\infty}(\bbr^d) \; : \; R_s(X_s) \cap D_s \neq \emptyset \; \forall s < t, \; \rho_t(\pi_{t,T}(X)) \cap D_t \neq \emptyset\}\\
                &= \sum_{s = 0}^{t-1} R_s^-(D_s) \ind{s} + \rho_t^-(D_t) \indn{\bbt_t}.
            \end{align*}
            Let $X^{n} \subseteq \bar R_t^-(D) \to X$ in $\sigma(\bar L^{\infty}(\bbr^d),\bar L^1(\bbr^d))$.  By Lemma~\ref{lemma:convergence} and assumption, $X_s^{n} \to X_s \in R_s^-(D_s)$ for every $s < t$ and $X^{n} \indn{\bbt_t} \to X \indn{\bbt_t} \in \rho_t^-(D_t)$.  Therefore conditional convex upper continuity is proven as $X \in \bar R_t^-(D)$.
        \end{itemize}
    \end{enumerate}
\end{enumerate}
\end{proof}

Given these results on the equivalence of risk measures for processes and vectors, we consider the special case in which every asset is eligible to cover risk, i.e., $M = \bbr^d$.  This setting provides a simpler equivalent relation insofar as the restricted risk measures $R_s$ utilized in Theorem~\ref{thm:equiv-primal} above can be fully defined by its value at $0$.  This setting more clearly demonstrates the relationship with scalar risk measures, as $R_s(0) = L_s^{\infty}(\bbr^d_+)$ in the $d = 1$ asset setting for any choice of normalized risk measure with closed values.
\begin{corollary}\label{cor:equiv-primal-full}
Let $M = \bbr^d$.  The equivalences in Theorem~\ref{thm:equiv-primal} can be simplified insofar as the conditional risk measures $R_s$ can be replaced with (conditionally convex) upper sets $C_s$, i.e.,
    \begin{align*}
    C_s &:= \bar R_t(0)_s\\
    \bar R_t(X) &= \sum_{s = 0}^{t-1} (-X_s + C_s)\ind{s} + \rho_t(\pi_{t,T}(X))\indn{\bbt_t}.
    \end{align*}
\end{corollary}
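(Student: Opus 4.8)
The plan is to exploit that when $M = \bbr^d$ the space of eligible portfolios $M_s = L_s^\infty(M) = L_s^\infty(\bbr^d)$ coincides with the \emph{entire} domain of the restricted risk measure $R_s$. Consequently every argument of $R_s$ is itself an eligible portfolio, and cash invariance can be applied with the eligible asset taken to be the argument, collapsing $R_s$ to a single translate of its value at zero.

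First I would fix $s \in \{0,\dots,t-1\}$ and, using $M_s = L_s^\infty(\bbr^d)$, apply the cash invariance axiom of Definition~\ref{defn:rm-vector} (restricted to $\fcal_s$) with the eligible portfolio $m = Z \in L_s^\infty(\bbr^d)$ to obtain $R_s(Z) = R_s(0 + Z) = R_s(0) - Z$ for every $Z \in L_s^\infty(\bbr^d)$. Setting $C_s := R_s(0)$, this shows $R_s(Z) = -Z + C_s$, so $R_s$ is entirely determined by the single set $C_s$. Evaluating the defining relation \eqref{eq:primal-R} at $Z = 0$ identifies $C_s = R_s(0) = \bar R_t(0)_s$, matching the stated formula for $C_s$. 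Substituting $R_s(X_s) = -X_s + C_s$ into the decomposition \eqref{eq:primal-barR} of Theorem~\ref{thm:equiv-primal} then yields immediately
\[\bar R_t(X) = \sum_{s = 0}^{t-1}(-X_s + C_s)\ind{s} + \rho_t(\pi_{t,T}(X))\indn{\bbt_t}.\]

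Conversely, I would verify that any family of nonempty closed $\fcal_s$-decomposable upper sets $C_s \in \ucal(M_s;M_{s,+})$ satisfying the finiteness-at-zero condition of Definition~\ref{defn:rm-vector} (and conditionally convex whenever the convex case is considered) reconstructs a legitimate restricted risk measure via $R_s(Z) := -Z + C_s$, so that Theorem~\ref{thm:equiv-primal} applies verbatim and no data beyond the sets $C_s$ and the process risk measure $\rho_t$ is required.

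The only point needing care---and the main, essentially minor, obstacle---is confirming that the structural axioms for $R_s$ reduce exactly to $C_s$ being a (conditionally convex) upper set, i.e.\ that nothing is lost by recording only $C_s$. Monotonicity is automatic: if $Z \leq Y$ then $Y - Z \in M_{s,+}$, whence $R_s(Z) = -Y + (Y - Z) + C_s \subseteq -Y + C_s = R_s(Y)$ by the upper-set property $C_s + M_{s,+} = C_s$; finiteness at zero and closedness are precisely the imposed hypotheses on $C_s$; and conditional convexity of $R_s$ is equivalent to that of $C_s$ since the affine translation $Z \mapsto -Z + C_s$ preserves conditional convexity. Thus the restricted risk measures $R_s$ carry no information beyond the upper sets $C_s$, which establishes the claimed simplification.
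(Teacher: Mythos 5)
Your proposal is correct and follows essentially the same route as the paper: with $M=\bbr^d$ every argument of $R_s$ is an eligible portfolio, so cash invariance gives $R_s(Z)=R_s(0)-Z$ and only $C_s:=R_s(0)=\bar R_t(0)_s$ is needed in the decomposition of Theorem~\ref{thm:equiv-primal}. Your additional verification that the axioms of $R_s$ reduce precisely to $C_s$ being a (conditionally convex) upper set is a harmless elaboration of what the paper leaves implicit.
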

\begin{proof}
This follows from Theorem~\ref{thm:equiv-primal} by noting that $R_s(X \ind{s}) = R_s(0) - X_s$ by cash invariance for this choice of eligible assets.  Therefore the full risk measure $R_s$ is no longer required, but only the risk measure at $0$ which we denote $C_s$. 
\end{proof}

We conclude this section on the equivalence of primal representations by providing the relation between acceptance sets for processes and vectors on the optional filtration.
\begin{corollary}\label{cor:equiv-primal-acceptance}
Fix time $t \in \bbt$.
\begin{enumerate}
\item Consider a conditional risk measure for processes $\rho_t: \rcal_t^{\infty,d} \to \ucal(M_t;M_{t,+})$ and series of conditional risk measures $R_s: L_s^{\infty}(\bbr^d) \to \ucal(M_s;M_{s,+})$ for $s=0, 1, \dots, t-1$, and $\bar R_t$ is defined as \eqref{eq:primal-barR}, then
$\bar A_t = \sum_{s = 0}^{t-1} A_{R_s} \ind{s} + A_t \indn{\bbt_t}$.
\item  Given a conditional risk measure for random vectors on the optional filtration $\bar R_t: \bar L^{\infty}(\bbr^d) \to \ucal(\bar M_t;\bar M_{t,+})$, define $\rho_t$ and $R_s$ via \eqref{eq:primal-rho} and \eqref{eq:primal-R} respectively, then $A_t = \{X \in \rcal_t^{\infty,d} \; : \; X\indn{\bbt_t} \in \bar A_t\}$ and $A_{R_s} = (\bar A_t)_s$ for every $s = 0,1,\dots,t-1$.
\end{enumerate}
\end{corollary}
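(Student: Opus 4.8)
The plan is to derive both identities from the one-to-one correspondence between a risk measure and its acceptance set (i.e.\ $A = \{\,\cdot\; :\; 0 \in (\cdot)\,\}$ together with the reconstruction $\{m : \cdot + m \in A\}$) combined with the primal relations \eqref{eq:primal-barR}--\eqref{eq:primal-R}. For the first statement I would start from $\bar A_t = \{X \in \bar L^{\infty}(\bbr^d) : 0 \in \bar R_t(X)\}$ and substitute the decomposed form \eqref{eq:primal-barR}. Since $\{v\ind{s} : v \in R_s(X_s)\}$ and $\{w\indn{\bbt_t} : w \in \rho_t(\pi_{t,T}(X))\}$ are supported on disjoint coordinate blocks of $\bar M_t$, the zero vector lies in their Minkowski sum precisely when $0 \in R_s(X_s)$ for every $s < t$ and $0 \in \rho_t(\pi_{t,T}(X))$, i.e.\ when $X_s \in A_{R_s}$ for each $s<t$ and $\pi_{t,T}(X) \in A_t$. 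Reading this coordinate-wise membership back as the Minkowski sum $\sum_{s=0}^{t-1} A_{R_s}\ind{s} + A_t\indn{\bbt_t}$ yields the claim; the only care needed is the bookkeeping of which block each factor occupies.

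For the second statement I would again pass through acceptance sets, now using $\rho_t(X) = \bar R_t(X\indn{\bbt_t})_t$ and $R_s(Z) = \bar R_t(Z\ind{s})_s$. Two of the four inclusions are immediate. The inclusion $A_{R_s} \subseteq (\bar A_t)_s$ follows from cash invariance: given $m \in \bar R_t(Z\ind{s})$ with $m_s = 0$, cash invariance gives $0 \in \bar R_t(Z\ind{s} + m)$, and since $(Z\ind{s}+m)_s = Z$ this exhibits $Z$ as the $s$-th coordinate of an element of $\bar A_t$. The inclusion $\{X \in \rcal_t^{\infty,d} : X\indn{\bbt_t} \in \bar A_t\} \subseteq A_t$ is even simpler, since membership of the full zero in $\bar R_t(X\indn{\bbt_t})$ forces its $t$-th coordinate $0$ into $\bar R_t(X\indn{\bbt_t})_t = \rho_t(X)$.

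The reverse inclusions are the crux. A naive translation fails: to promote a single-coordinate condition to membership of a \emph{full} acceptable vector one would want to subtract the off-coordinate entries of a witnessing $Y \in \bar A_t$, but these need not lie in the eligible space $M$, so cash invariance is unavailable. I would instead invoke the time decomposability of $\bar R_t$ --- automatic for conditionally convex measures by Remark~\ref{rem:td} and in general the content of Theorem~\ref{thm:equiv-primal}(\ref{thm:equiv-primal-3}) --- which supplies the projection identities $\bar R_t(Y)_s = \bar R_t(Y\ind{s})_s = R_s(Y_s)$ for $s<t$ and the corresponding terminal identity at coordinate $t$. Projecting $0 \in \bar R_t(Y)$ coordinate-by-coordinate then gives $0 \in R_s(Y_s)$ and $0 \in \rho_t(\pi_{t,T}(Y))$, delivering $(\bar A_t)_s \subseteq A_{R_s}$ and, for the terminal block, the identity for $A_t$. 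I expect this decomposition step to be the main obstacle: it is precisely where the risk accrued strictly before time $t$ must be disentangled from the risk at time $t$, so the argument has to track the acceptability of the earlier coordinates separately --- the same augmentation phenomenon (the auxiliary measures $R_s$) that pervades Theorem~\ref{thm:equiv-primal}. Care is also needed to ensure that all the coordinate projections respect the $\bar\fcal_t$-decomposable, random-set structure guaranteed by finiteness at zero.
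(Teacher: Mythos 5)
Your treatment of part (1) and of three of the four inclusions in part (2) is correct and follows essentially the route the paper intends (its own proof is a one-line appeal to the constructions in Theorem~\ref{thm:equiv-primal}); in particular, you are right that the inclusion $(\bar A_t)_s \subseteq A_{R_s}$ hinges on time decomposability, and your diagnosis of why the naive cash-invariance argument fails there (the off-coordinates of a witness $Y \in \bar A_t$ take values in $\bbr^d$, not in $M$) is accurate.

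However, the fourth inclusion, $A_t \subseteq \{X \in \rcal_t^{\infty,d} : X\indn{\bbt_t} \in \bar A_t\}$, is not actually delivered by your argument. You dispose of it by ``projecting $0 \in \bar R_t(Y)$ coordinate-by-coordinate,'' but projection only yields the opposite containment $\{X : X\indn{\bbt_t}\in\bar A_t\}\subseteq A_t$, which you had already labelled immediate. What is needed is an assembly step: from $0 \in \rho_t(X) = \bar R_t(X\indn{\bbt_t})_t$ you must produce the \emph{full} zero vector in $\bar R_t(X\indn{\bbt_t})$. Writing out time decomposability at the argument $X\indn{\bbt_t}$ gives $\bar R_t(X\indn{\bbt_t}) = \sum_{s=0}^{t-1}\bar R_t(0)\ind{s} + \bar R_t(X\indn{\bbt_t})\indn{\bbt_t}$, so the full zero lies in $\bar R_t(X\indn{\bbt_t})$ if and only if, in addition to $0\in\rho_t(X)$, one has $0 \in \bar R_t(0)_s = R_s(0)$ for every $s<t$. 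Finiteness at zero only guarantees $\bar R_t(0)\neq\emptyset$, not $0\in\bar R_t(0)_s$, so this step requires an extra normalization-type hypothesis (equivalently, $0\in A_{R_s}$ for all $s<t$) or must be argued separately. As written, the proof of the identity for $A_t$ is incomplete at precisely the point you yourself flagged as the crux; the paper's one-sentence proof glosses over the same point, but the gap is real and should be stated explicitly.
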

\begin{proof}
This result follows directly by construction of the equivalences in Theorem~\ref{thm:equiv-primal}. 
\end{proof}

\subsection{Relation for dual representations}\label{sec:equiv-dual}

Before presenting the following results, we want the reader to recall the dual representations for risk measures for processes in Corollary~\ref{cor:dual-process} and vectors in Corollary~\ref{cor:dual-vector}.  We also highlight the form for the penalty functions for risk measures $R_s: L_s^{\infty}(\bbr^d) \to \gcal(M_s;M_{s,+})$ as provided in Remark~\ref{rem:Rt}.  Finally, we remind the reader that any conditionally convex risk measure for vectors is time decomposable by construction as given in Remark~\ref{rem:td}.

\begin{lemma}\label{lemma:equiv-dual}
Fix time $t \in \bbt$.
\begin{enumerate}
\item\label{lemma:equiv-dual-1} Let $\rho_t: \rcal_t^{\infty,d} \to \gcal(M_t;M_{t,+})$ be a closed conditionally convex risk measure for processes and let $R_s: L_s^{\infty}(\bbr^d) \to \gcal(M_s;M_{s,+})$, $s < t$, be a series of closed conditionally convex risk measures. The associated closed conditionally convex risk measure on the optional filtration $\bar R_t: \bar L^{\infty}(\bbr^d) \to \gcal(\bar M_t;\bar M_{t,+})$ defined in~\eqref{eq:primal-barR} has dual representation w.r.t.\ the penalty function
    \begin{align*}
    \bar\alpha_t(\bar\Q, \bar w) = \bar\alpha_t(\Q\otimes\psi,\bar w) = \sum_{s=0}^{t-1} \alpha_{R_s}(\bar w_s) \ind{s} + \alpha_t(W_t(\Q\otimes\psi,\bar w))\indn{\bbt_t}
    \end{align*}
    where $W_t(\Q\otimes\psi,\bar w) := (\hat\Q, w) \in \W_t$ is defined as
    \begin{align*}
    \frac{d\hat\Q_{s,i}}{d\P} &:= \frac{\psi_{s,i}}{\E_t^{\Q_i}[\psi_{s,i}]}\xi_{t,s}(\Q_i) \ind{\E_t^{\Q_i}[\psi_{s,i}] > 0} + \frac{\mu_s}{\E_t[\mu_s]} \ind{\E_t^{\Q_i}[\psi_{s,i}] = 0}, \\
    w_{s,i} &:= \frac{\E_t^{\Q_i}[\psi_{s,i}]}{1-\sum_{r = 0}^{t-1} \mu_r} \bar{w}_{t,i}
    \end{align*}
    for indices $i = 1,2,\dots,d$ and times $s \in \bbt_t$.

\item\label{lemma:equiv-dual-2} Consider a closed conditionally convex risk measure for random vectors on the optional filtration $\bar R_t: \bar L^{\infty}(\bbr^d) \to \gcal(\bar M_t;\bar M_{t,+})$.  The associated closed conditionally convex risk measure for processes $\rho_t: \rcal_t^{\infty,d} \to \gcal(M_t;M_{t,+})$ and series of closed conditionally convex risk measures
    $R_s: L_s^{\infty}(\bbr^d) \to \gcal(M_s;M_{s,+})$ defined in~\eqref{eq:primal-rho} and~\eqref{eq:primal-R}, respectively, have dual representations w.r.t.\ the penalty functions
    \begin{align*}
    \alpha_{R_s}(w_s) &=\bar\alpha_t(\bar\P,w_s\ind{s})_s,\\
    \alpha_t(\Q, w) &=\bar{\alpha}_t(\bar W_t(\Q,w))_t,
    \end{align*}
    where $\bar W_t(\Q,w) := (\bar\Q, \bar w) \in \bar\W_t$ is defined as
    \begin{align*}
    \bar w &:= \sum_{s\in\bbt_t} \diag\left(\ind{w_s > 0}\right) w_s \indn{\bbt_t},\\
    \left(\frac{d\bar\Q_i}{d\bar\P}\right)_s &:= \ind{s \in [0,t)} + \frac{1 - \sum_{r = 0}^{t-1} \mu_r}{\mu_s} \left(\frac{w_{s,i}}{\bar w_{t,i}} \ind{\bar w_{t,i} > 0} + \ind{\bar w_{t,i} = 0}\right) \xi_{t,s}(\Q_{s,i}) \indn{\bbt_t}
    \end{align*}
    for indices $i = 1,2,\dots,d$.
\end{enumerate}
\end{lemma}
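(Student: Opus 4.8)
The plan is to prove both directions by computing the minimal penalty functions directly from their definitions as intersections over the relevant acceptance sets, reducing everything to the matching of time‑components via Corollary~\ref{cor:equiv-primal-acceptance} and the explicit conditional expectation of Corollary~\ref{cor:barQ-conditional}. For part~\ref{lemma:equiv-dual-1} I would start from the vector penalty $\bar\alpha_t(\bar\Q,\bar w) = \bigcap_{Y \in \bar A_t}(\bar\E_t^{\bar\Q}[-Y] + \bar\Gamma_t(\bar w)) \cap \bar M_t$ of Corollary~\ref{cor:dual-vector}. Since $\bar\E_t^{\bar\Q}[-Y]$, $\bar\Gamma_t(\bar w)$, and $\bar M_t$ are all $\bar\fcal_t$-decomposable, membership is checked component‑by‑component, so the intersection commutes with projection onto each time slot and $\bar\alpha_t(\bar\Q,\bar w) = \sum_{s=0}^{t-1}\bar\alpha_t(\bar\Q,\bar w)_s \ind{s} + \bar\alpha_t(\bar\Q,\bar w)_t\indn{\bbt_t}$. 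It then suffices to identify each component separately. The decomposition $\bar A_t = \sum_{s=0}^{t-1}A_{R_s}\ind{s} + A_t\indn{\bbt_t}$ shows that the $s$-th slot only depends on $Y_s \in A_{R_s}$ and the $\bbt_t$-slot only on $Y\indn{\bbt_t}$ with $Y \in A_t$.

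For the past components $s < t$ the matching is routine: Corollary~\ref{cor:barQ-conditional} gives $\bar\E_t^{\bar\Q}[-Y]_s = -Y_s$ and $(\bar\Gamma_t(\bar w)\cap\bar M_t)_s = \Gamma_s(\bar w_s)\cap M_s$, so $\bar\alpha_t(\bar\Q,\bar w)_s = \bigcap_{Y_s\in A_{R_s}}(-Y_s+\Gamma_s(\bar w_s))\cap M_s = \alpha_{R_s}(\bar w_s)$, recalling from Remark~\ref{rem:Rt} that this restricted penalty carries no probability‑measure argument precisely because $Y_s$ is $\fcal_s$-measurable. No Minkowski‑sum structure enters here, so this step is essentially bookkeeping.

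The substantive step is the future component on $\bbt_t$. Here Corollary~\ref{cor:barQ-conditional}, together with $\psi_{r,i}=\mu_r$ for $r<t$ (valid since $\bar\Q=\bar\P$ on $\bar\fcal_t$ by Corollary~\ref{cor:barQ-t}), gives componentwise $\bar\E_t^{\bar\Q}[-Y\indn{\bbt_t}]_{t,i} = \E_t^{\Q_i}[\sum_{s\in\bbt_t}\frac{\psi_{s,i}}{1-\sum_{r=0}^{t-1}\mu_r}(-Y_{s,i})]$. The map $W_t(\Q\otimes\psi,\bar w)=(\hat\Q,w)$ is engineered so that this single $\psi$-weighted conditional expectation disintegrates into the per‑time conditional expectations defining $\alpha_t$: unpacking $d\hat\Q_{s,i}/d\P$ one verifies the Bayes‑type identity $\xi_{t,s}(\hat\Q_{s,i}) = \psi_{s,i}\xi_{t,s}(\Q_i)/\E_t^{\Q_i}[\psi_{s,i}]$ on $\{\E_t^{\Q_i}[\psi_{s,i}]>0\}$, whence $w_{s,i}\E_t^{\hat\Q_{s,i}}[-Y_{s,i}] = \frac{\bar w_{t,i}}{1-\sum_{r=0}^{t-1}\mu_r}\E_t^{\Q_i}[\psi_{s,i}(-Y_{s,i})]$; summing over $s$ and $i$ yields both $\sum_{s\in\bbt_t}w_s=\bar w_t$ and the key consistency of dual pairings $\sum_{s\in\bbt_t} w_s^\T\E_t^{\hat\Q_s}[-Y_s] = \bar w_t^\T\bar\E_t^{\bar\Q}[-Y\indn{\bbt_t}]_t$. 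I expect the \emph{main obstacle} to be promoting this scalar pairing identity to the required set equality $\bar\alpha_t(\bar\Q,\bar w)_t = \alpha_t(W_t(\Q\otimes\psi,\bar w))$, i.e. matching the single conditional half‑space $(\bar\E_t^{\bar\Q}[-Y\indn{\bbt_t}]_t + \Gamma_t(\bar w_t))\cap M_t$ against the Minkowski‑sum‑over‑times expression $\sum_{s\in\bbt_t}(\E_t^{\hat\Q_s}[-Y_s]+\Gamma_t(w_s))\cap M_t$ defining $\alpha_t$. This demands careful control of the set‑valued structure — the behaviour of Minkowski sums of conditional half‑spaces under the intersection over $Y\in A_t$, which may be cleanest to argue through the minimal‑penalty characterisation and reconstruction of $\rho_t$ rather than by naive per‑$Y$ set algebra — as well as the degenerate cases on $\{\E_t^{\Q_i}[\psi_{s,i}]=0\}$ and $\{\bar w_{t,i}=0\}$, where the definitions of $\hat\Q$, $w$, and $\bar\Gamma_t$ must be read with the conventions of Remark~\ref{rem:psi}. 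I would close by checking $(\hat\Q,w)\in\W_t$, i.e. $\hat\Q_s\in\M_t(\P)^d$, $w_s\in M_{t,+}^{\ast}\setminus M_t^\perp$, and $w_t^s(\hat\Q_s,w_s)\geq 0$, which follows from $\psi\geq0$, $\sum_{s\in\bbt_t}w_s=\bar w_t$, and $(\bar\Q,\bar w)\in\bar\W_t$.

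For part~\ref{lemma:equiv-dual-2} the argument runs in the opposite direction. Using $A_t = \{X\in\rcal_t^{\infty,d}: X\indn{\bbt_t}\in\bar A_t\}$ and $A_{R_s}=(\bar A_t)_s$ from Corollary~\ref{cor:equiv-primal-acceptance}, I would show that $\bar W_t$ serves as the appropriate inverse to $W_t$, so that the same disintegration identity read backwards turns $\alpha_t(\Q,w)$ into the $\bbt_t$-component of $\bar\alpha_t(\bar W_t(\Q,w))$ and $\alpha_{R_s}(w_s)$ into the $s$-component of $\bar\alpha_t(\bar\P,w_s\ind{s})$. The verification that $\bar W_t(\Q,w)\in\bar\W_t$ — in particular that $\bar w = \sum_{s\in\bbt_t}\diag(\ind{w_s>0})w_s\indn{\bbt_t}\in\bar M_{t,+}^{\ast}\setminus\bar M_t^\perp$ and that the stated density defines a measure in $\bar\M_t(\bar\P)^d$ — proceeds exactly as in part~\ref{lemma:equiv-dual-1}, with the indicators $\ind{\bar w_{t,i}>0}$ handling the degenerate directions.
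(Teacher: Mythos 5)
Your proposal follows essentially the same route as the paper: decompose the acceptance set via Corollary~\ref{cor:equiv-primal-acceptance}, split the penalty function by time slots using the conditional expectation of Corollary~\ref{cor:barQ-conditional}, match the future component through the pairing identity induced by $W_t$ (resp.\ $\bar W_t$), and close by verifying that the transformed dual variables lie in $\W_t$ (resp.\ $\bar\W_t$). The one step you flag as the main obstacle --- promoting the scalar pairing identity to the required set equality --- is handled in the paper by writing both $\bar\alpha_t$ and $\alpha_t$ from the outset in their scalarized half-space form $\{u \; : \; w^\T u \geq \esssup_{Z}(\cdots)\}$, after which the set equality is an immediate consequence of the scalar identity; this is precisely the ``minimal-penalty characterisation'' route you suggest, so there is no genuine gap.
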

\begin{proof}
\begin{enumerate}
\item We will prove this result by first demonstrating the decomposition of the penalty function $\bar\alpha_t$; once that is demonstrated we will then show that $W_t(\Q \otimes \psi,\bar w) \in \W_t$ guaranteeing that the decomposition is well-defined.
Recall the definition of the minimal penalty function $\bar\alpha_t$ from Corollary~\ref{cor:dual-vector}, i.e.,
\begin{align*}
\bar{\alpha}_t(\bar\Q, \bar w)&=\bigcap_{Z \in \bar A_t}  \left(\bar\E_t^{\bar\Q}[-Z] + \bar\Gamma_t(\bar w)\right) \cap \bar M_t\\
&=\left\{u \in \bar M_t \; : \; \bar w^\T  u \geq \esssup_{Z \in \bar A_t} \bar w^\T  \bar\E_t^{\bar\Q}[-Z] \; \bar\P\mbox{-a.s.} \right\}.
\end{align*}
Consider now the decomposition of the acceptance set $\bar A_t = \sum_{s = 0}^{t-1} A_{R_s}\ind{s} + A_t \indn{\bbt_t}$ as given in Corollary~\ref{cor:equiv-primal-acceptance}.  Additionally, assume $\bar\Q_i = \Q_i \otimes \psi_i$ be the decomposition as provided in Corollary~\ref{cor:barQ-t} for each component $i$ of the vector of measures $\bar\Q \in \bar\M_t(\bar\P)^d$. Recall from Corollary~\ref{cor:barQ-t} that $\psi_{s,i} = \mu_s$ for every asset $i$ and time $s < t$.  Additionally, and as a direct consequence of this equality, $\sum_{s = 0}^{t-1} \psi_{s,i} < 1$ for every asset $i$. From this relation, it follows that
\begin{align*}
&\bar\alpha_t(\bar\Q, \bar w)\\
&= \left\{u \in \bar M_t \; : \;  \begin{array}{rl} \bar w^\T  u \geq &\sum_{s=0}^{t-1} \esssup_{Z_s \in A_{R_s}} \bar w^\T  \bar\E_t^{\bar\Q}[-Z_s \ind{s}]\\ &+ \esssup_{Z \in A_t} \bar w^\T  \bar\E_t^{\bar\Q}[-Z \indn{\bbt_t}] \; \bar\P\mbox{-a.s.}\end{array}\right\}\\
&= \left\{u \in \bar M_t \; : \;  \sum_{s=0}^{t-1} \bar w_s^\T  u_s \ind{s} + \bar w_t^\T  u_t \indn{\bbt_t}
     \geq \sum_{s=0}^{t-1}  \esssup_{Z \in A_{R_s}} \bar w_s^\T (-Z_s) \ind{s}\right.\\
           &\qquad \left.+ \esssup_{Z \in A_t} \bar w_t^\T  \E_t^\Q[\sum_{s\in \bbt_t} \frac{1}{1 - \sum_{r = 0}^{t-1}\mu_r}\diag(\psi_s) (-Z_s)]\indn{\bbt_t} \; \bar\P\mbox{-a.s.}\right\}\\
&= \sum_{s=0}^{t-1} \left\{u_s \in M_s \; : \;  \bar w_s^\T  u_s \geq \esssup_{Z \in A_{R_s}} \bar w_s^\T (-Z_s) \right\} \ind{s}\\
   &\quad +  \left\{u_t \in M_t \; : \;  \bar w_t^\T  u_t \geq  \esssup_{Z \in A_t}\bar w_t^\T \E_t^\Q[\sum_{s\in \bbt_t}\frac{\diag(\psi_s)}{1 - \sum_{r = 0}^{t-1} \mu_r} (-Z_s)] \right\} \indn{\bbt_t}.
\end{align*}
Furthermore, by construction of $(\hat\Q,w) = W_t(\bar\Q,\bar w)$,
\begin{align*}
\sum_{s \in \bbt_t} w_s^\T \E_t^{\hat\Q_s}[Y_s] &= \bar w_t^\T \E_t^\Q[\sum_{s \in \bbt_t}\frac{1}{1 - \sum_{r = 0}^{t-1}\mu_r}\diag(\psi_s) Y_s]
 \end{align*}
for any $Y \in \rcal_t^{\infty,d}$.
Therefore, the desired result follows if $(\hat\Q,w) \in \W_t$.
In fact, this holds by the relation $w_t^s(\hat\Q_s,w_s) = \mu_s / (1 - \sum_{r = 0}^{t-1} \mu_r) \bar w_t^T(\bar\Q,\bar w)_s \in L_s^1(\bbr^d_+)$.

\item We will prove this result by first demonstrating that $\alpha_{R_s}$ is the minimal penalty function for $R_s$.  We will then consider the form of $\alpha_t$ as provided in the statement of the lemma; we accomplish this by proving that the appropriate representation holds and that $\bar W_t(\Q,w) \in \bar\W_t$ so that the representation is well-defined.
First, recall the definition of the minimal penalty function $\alpha_{R_s}$ for the risk measure $R_s: L_s^{\infty}(\bbr^d) \to \gcal(M_s;M_{s,+})$ as provided in Remark~\ref{rem:Rt}, i.e.,
\begin{align*}
\alpha_{R_s}(w) &= \bigcap_{Z \in A_{R_s}} \left\{(-Z + \Gamma_s(w)) \cap M_s\right\}\\
&= \left\{u \in M_s \; : \; w^\T u \geq \esssup_{Z \in A_{R_s}} w^\T (-Z) \right\}
\end{align*}
for any $w \in L_s^1(\bbr^d_+) \setminus M_s^\perp$.
Using the construction of $A_{R_s} = (\bar A_t)_s$ as given in Corollary~\ref{cor:equiv-primal-acceptance}, the construction of $\alpha_{R_s}$ in this lemma is trivial since $(\bar\P,w\ind{s}) \in \bar\W_t$ by inspection.

Second, recall the definition of the minimal penalty function $\alpha_t$ for the risk measure $\rho_t$ for processes as given in Corollary~\ref{cor:dual-process}, i.e.,
\begin{align*}
\alpha_t(\Q&,w) = \bigcap_{Z \in A_t} \sum_{s \in \bbt_t} \left(\E_t^{\Q_s}[-Z_s] + \Gamma_t(w_s)\right)\cap M_t\\
    &= \left\{u \in M_t \; : \;  \sum_{s\in \bbt_t} w_s^\T  u \geq \esssup_{Z \in A_t} \sum_{s\in \bbt_t} w_s^\T  \E_t^{\Q_s}[-Z_s] \right\}\\
    &= \left\{u \in M_t \; : \;  \begin{array}{l}\sum_{s\in \bbt_t} w_s^\T \diag(\ind{w_s > 0})  u \geq\\ \esssup_{Z \in A_t} \sum_{s\in \bbt_t} w_s^\T \diag(\ind{w_s > 0})  \E_t^{\Q_s}[-Z_s] \end{array}\right\}
\end{align*}
for any $(\Q,w) \in \W_t$.  The terminal equality above follows from the construction of the eligible assets and $w_s \geq 0 \; \Q_s\mbox{-a.s.}$.
By construction of $(\bar\Q,\bar w) = \bar W_t(\Q,w)$,
\[\left(\bar w^\T \bar\E_t^{\bar\Q}[Y]\right)_r = \sum_{s\in \bbt_t} w_s^\T  \E_t^{\Q_s}[Y_s]\]
for any $Y \in \bar L^{\infty}(\bbr^d)$ and $r \in \bbt_t$.
Therefore, using the construction of $A_t = \{X \in \rcal_t^{\infty,d} \; : \; X\ind{\bbt_t} \in \bar A_t\}$ as given in Corollary~\ref{cor:equiv-primal-acceptance}, the definition of $\alpha_t$ as provided in this lemma holds so long as $\bar\W_t(\Q,w) \in \bar\W_t$.
In fact, $\bar W_t(\Q,w) \in \bar\W_t$ holds by the relation 
\[\bar w_t^T(\bar W_t(\Q,w)) = (1 - \sum_{r = 0}^{t-1} \mu_r) \sum_{s \in \bbt_t} \mu_s^{-1} w_t^s(\Q_s,w_s)\ind{s} \in \bar L^1(\bbr^d_+).\]
\end{enumerate}
\end{proof}

\begin{remark}
We highlight that, in the full eligible space setting $M = \bbr^d$ with the notation given in Corollary~\ref{cor:equiv-primal-full}, the penalty function representation simplifies insofar as $\alpha_{R_s}(\bar w_s) = \Gamma_s(\bar w_s)$ if $\bar w_s \in C_s^{\ast}$ and $\emptyset$ otherwise.
\end{remark}

We, additionally, consider the relation between the dual representations for conditionally coherent risk measures.  This is provided in the following corollary utilizing the maximal set of dual variables.
\begin{corollary}
Fix time $t \in \bbt$.
\begin{enumerate}
\item Let $\rho_t: \rcal_t^{\infty,d} \to \gcal(M_t;M_{t,+})$ be a closed conditionally coherent risk measure for processes and let $R_s: L_s^{\infty}(\bbr^d) \to \gcal(M_s;M_{s,+})$, $s < t$, be a series of closed conditionally coherent risk measures.  The associated closed conditionally coherent risk measure on the optional filtration $\bar R_t: \bar L^{\infty}(\bbr^d) \to \gcal(\bar M_t;\bar M_{t,+})$ defined in~\eqref{eq:primal-barR} has dual representation w.r.t.\ the set of dual variables
    \begin{align*}
    \bar\W_t^{\max} = \{(\bar\Q,\bar w) \in \bar\W_t \; : \; \bar w_s \in \W_{R_s}^{\max} \; \forall s = 0,\dots,t-1, \; W_t(\bar\Q,\bar w) \in \W_t^{\max}\}.
    \end{align*}
\item Consider a closed conditionally coherent risk measure for vectors on the optional filtration $\bar R_t: \bar L^{\infty}(\bbr^d) \to \gcal(\bar M_t;\bar M_{t,+})$.  The associated closed conditionally coherent risk measure for processes $\rho_t: \rcal_t^{\infty,d} \to \gcal(M_t;M_{t,+})$ and series of closed conditionally coherent risk measures $R_s: L_s^{\infty}(\bbr^d) \to \gcal(M_s;M_{s,+})$ defined in~\eqref{eq:primal-rho} and~\eqref{eq:primal-R}, respectively, have dual representations w.r.t.\ the set of dual variables
    \begin{align*}
    \W_{R_s}^{\max} &= \bigcup_{\bar\Q \in \bar\M_t(\bar\P)^d} \{\bar w_s \; : \; (\bar\Q,\bar w) \in \bar\W_t^{\max}\},\\
    \W_t^{\max} &= \{(\Q,w) \in \W_t \; : \; \bar W_t(\Q,w) \in \bar\W_t^{\max}\}.
    \end{align*}
\end{enumerate}
\end{corollary}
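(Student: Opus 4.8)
The plan is to reduce the entire statement to a single observation about coherent risk measures, namely that membership in a maximal set of dual variables is equivalent to the associated penalty function containing the origin, and then to feed the penalty decompositions of Lemma~\ref{lemma:equiv-dual} into this characterization. Since a closed conditionally coherent risk measure is in particular conditionally convex (hence time decomposable), all the convex machinery of Lemma~\ref{lemma:equiv-dual} and Theorem~\ref{thm:equiv-primal} is available throughout.

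\textbf{First step: rewrite the maximal sets as zero-sublevel sets.} Reading off $\alpha_t$ from Corollary~\ref{cor:dual-process}, one has $0\in\alpha_t(\Q,w)$ iff $\esssup_{Z\in A_t}\sum_{s\in\bbt_t}w_s^\T\E_t^{\Q_s}[-Z_s]\leq 0$, i.e.\ iff $\sum_{s\in\bbt_t}w_s^\T\E_t^{\Q_s}[Z_s]\geq 0$ for every $Z\in A_t$; this is exactly the defining condition of $\W_t^{\max}$. Hence $\W_t^{\max}=\{(\Q,w)\in\W_t:0\in\alpha_t(\Q,w)\}$, and the identical computation yields $\bar\W_t^{\max}=\{(\bar\Q,\bar w)\in\bar\W_t:0\in\bar\alpha_t(\bar\Q,\bar w)\}$ as well as $\W_{R_s}^{\max}=\{w:0\in\alpha_{R_s}(w)\}$. (These equalities are identities at the level of definitions; coherence enters only insofar as it guarantees the penalty functions take the indicator form used in Corollaries~\ref{cor:dual-process} and \ref{cor:dual-vector}.)

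\textbf{Second step: read $0\in$penalty coordinatewise.} The decompositions in Lemma~\ref{lemma:equiv-dual} present every optional-filtration penalty as a direct sum over the disjoint coordinates $\ind{0},\dots,\ind{t-1},\indn{\bbt_t}$. Because the zero element of $\bar M_t$ has all coordinates equal to $0$, membership $0\in\bar\alpha_t(\bar\Q,\bar w)$ holds iff $0\in\alpha_{R_s}(\bar w_s)$ for every $s<t$ and $0\in\alpha_t(W_t(\bar\Q,\bar w))$. Combining this with the first step proves Part 1 at once: $(\bar\Q,\bar w)\in\bar\W_t^{\max}$ iff $\bar w_s\in\W_{R_s}^{\max}$ for all $s<t$ and $W_t(\bar\Q,\bar w)\in\W_t^{\max}$. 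For Part 2 I would argue symmetrically using Lemma~\ref{lemma:equiv-dual}, part \ref{lemma:equiv-dual-2}. The identity $\alpha_t(\Q,w)=\bar\alpha_t(\bar W_t(\Q,w))_t$ gives $(\Q,w)\in\W_t^{\max}$ iff $0\in\bar\alpha_t(\bar W_t(\Q,w))_t$; since in $\bar W_t(\Q,w)$ the weight $\bar w$ is supported on $\bbt_t$, the coordinates $\bar w_s=0$ for $s<t$ force the corresponding penalty coordinates to the trivial value $M_s$, which contains $0$ automatically, so $0\in\bar\alpha_t(\bar W_t(\Q,w))_t$ is in fact equivalent to $0\in\bar\alpha_t(\bar W_t(\Q,w))$ in all coordinates, i.e.\ to $\bar W_t(\Q,w)\in\bar\W_t^{\max}$. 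This is the claimed description $\W_t^{\max}=\{(\Q,w)\in\W_t:\bar W_t(\Q,w)\in\bar\W_t^{\max}\}$. For the union representation of $\W_{R_s}^{\max}$, the identity $\alpha_{R_s}(w_s)=\bar\alpha_t(\bar\P,w_s\ind{s})_s$ with the first step shows that any $\bar w_s$ coming from some $(\bar\Q,\bar w)\in\bar\W_t^{\max}$ satisfies $0\in\alpha_{R_s}(\bar w_s)$ and hence lies in $\W_{R_s}^{\max}$, giving one inclusion; conversely, given $w_s\in\W_{R_s}^{\max}$ I would complete it to a full dual variable by choosing the remaining coordinates in their own maximal sets and an accompanying $\bar\Q$, so that the completed pair lies in $\bar\W_t^{\max}$ and has $s$-coordinate $w_s$.

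\textbf{Main obstacle.} The delicate part is the degenerate-coordinate bookkeeping in Part 2: I must verify that a vanishing component of $\bar w$ contributes the non-restrictive penalty $M_s$ (rather than being excluded as an inadmissible dual weight), and that the global admissibility $\bar W_t(\Q,w)\in\bar\W_t$ already established in Lemma~\ref{lemma:equiv-dual} survives even though individual coordinates of $\bar w$ are zero, the non-perpendicularity $\bar w\notin\bar M_t^\perp$ being carried entirely by the $\bbt_t$-block. The only other nonroutine point is the surjective half of the union representation for $\W_{R_s}^{\max}$, which requires the maximal dual sets of the coherent factors to be nonempty so that a compatible completion exists; this follows from the fact that each factor admits the coherent dual representation of Corollaries~\ref{cor:dual-process} and \ref{cor:dual-vector}.
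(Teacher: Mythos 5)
Your proposal is correct and follows essentially the same route as the paper: the paper's proof is precisely the observation that for coherent risk measures the penalty functions take indicator form (nonempty, equivalently containing $0$, exactly on the maximal dual set), so that the penalty decompositions of Lemma~\ref{lemma:equiv-dual} read off coordinatewise as the claimed decompositions of $\bar\W_t^{\max}$, $\W_t^{\max}$ and $\W_{R_s}^{\max}$. Your additional bookkeeping on the vanishing coordinates of $\bar w$ in $\bar W_t(\Q,w)$ and on the surjective half of the union representation makes explicit details the paper leaves implicit, but does not change the argument.
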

\begin{proof}
This is an immediate consequence of Lemma~\ref{lemma:equiv-dual} utilizing the indicator notion of the penalty functions, i.e.,
\begin{align*}
\alpha_t(\Q,w) &= \begin{cases} \sum_{s \in \bbt_t} \Gamma_t(w_s) \cap M_t &\text{if } (\Q,w) \in \W_t^{\max} \\ \emptyset &\text{else,}\end{cases} \\
\alpha_{R_s}(w_s) &= \begin{cases} \Gamma_s(w_s) \cap M_s &\text{if } w_s \in \W_{R_s}^{\max} \\ \emptyset &\text{else,}\end{cases} \\
\bar\alpha_t(\bar\Q,\bar w) &= \begin{cases} \bar\Gamma_t(\bar w) \cap \bar M_t &\text{if } (\bar\Q,\bar w) \in \bar\W_t^{\max} \\ \emptyset &\text{else.}\end{cases}
\end{align*} 
\end{proof}

We now use the results of this section in order to prove the dual representation for risk measures for processes as presented in Corollary~\ref{cor:dual-process}. Specifically, we take advantage of the known dual representation for set-valued risk measures for vectors (Corollary~\ref{cor:dual-vector}) and the equivalence of forms for the set-valued risk measures for processes and vectors presented above in order to prove this new dual representation for risk measures for processes.
\begin{proof}[Proof of Corollary~\ref{cor:dual-process}]
Let $\rho_t: \rcal_t^{\infty,d} \to \gcal(M_t;M_{t,+})$ be some closed conditionally convex risk measure for processes.  Additionally, fix a sequence of closed and conditionally convex risk measures $R_s: L_s^{\infty}(\bbr^d) \to \gcal(M_s;M_{s,+})$ for $s < t$.  The existence and uniqueness of a closed conditionally convex risk measure on the optional filtration $\bar R_t: \bar L^{\infty}(\bbr^d) \to \gcal(\bar M_t;\bar M_{t,+})$ such that $\rho_t(X) = \bar R_t(X \indn{\bbt_t})_t$ for any $X \in \rcal_t^{\infty,d}$ is guaranteed by Theorem~\ref{thm:equiv-primal}.
Consider, also, the decomposition of the penalty function from Lemma~\ref{lemma:equiv-dual} to determine that $\alpha_t(\Q,w) = \bar\alpha_t(\bar\Q,\bar w)_t$ for specifically constructed $(\bar\Q,\bar w) = \bar W_t(\Q,w)$.
Utilizing the dual representation for risk measures for vectors as provided in Corollary~\ref{cor:dual-vector}:
\begin{align*}
\rho_t(X) &= \bigg(\bigcap_{(\bar\Q,\bar w) \in \bar\W_t} \Big(\big(\bar\E_t^{\bar\Q}[-X\indn{\bbt_t}] + \bar\Gamma_t(\bar w)\big)\cap \bar M_t -^{\sbullet} \bar\alpha_t(\bar\Q,\bar w)\Big)\bigg)_t\\
&= \bigcap_{(\Q\otimes\psi,\bar w) \in \bar\W_t} \left(\begin{array}{l}\big(\E_t^\Q\left[-\sum_{s \in \bbt_t} \frac{\psi_s}{1 - \sum_{r = 0}^{t-1} \psi_r} X_s\right] + \Gamma_t(\bar w_t)\big) \cap M_t\\ \quad -^{\sbullet} \bar\alpha_t(\Q\otimes\psi,\bar w)_t\end{array}\right)\\
&= \bigcap_{(\Q\otimes\psi,\bar w) \in \bar\W_t} \bigg(\sum_{s \in \bbt_t}\Big(\big(\E_t^{\hat\Q_s}[-X_s] + \Gamma_t(w_s)\big) \cap M_t\Big) -^{\sbullet} \alpha_t(\hat\Q,w)\bigg)
\end{align*}
where $(\hat\Q,w) = W_t(\bar\Q,\bar w)$ is defined as in Lemma~\ref{lemma:equiv-dual}\eqref{lemma:equiv-dual-1}.
To recover~\eqref{eq:dual-process}, it remains to show that \[\{(w_t^s(\Q_s,w_s))_{s \in \bbt_t} \; : \; (\Q,w) \in \W_t\} = \{(w_t^s(W_t(\bar\Q,\bar w)_s))_{s \in \bbt_t} \; : \; (\bar\Q,\bar w) \in \bar\W_t\}.\]
First, by construction, $\{W_t(\bar\Q,\bar w) \; : \; (\bar\Q,\bar w) \in \bar\W_t\} \subseteq \W_t$ which implies $\supseteq$ in the desired equality.
Second, for any $(\Q,w) \in \W_t$ there exists $\bar W_t(\Q,w) \in \bar\W_t$ (as constructed in Lemma~\ref{lemma:equiv-dual}\eqref{lemma:equiv-dual-2}) so that $w_t^s(\Q_s,w_s) = w_t^s(W_t(\bar W_t(\Q,w))_s)$ for any $s \in \bbt_t$ which implies $\subseteq$ in the desired equality.
Thus the proof in the convex case is complete.

For the conditionally coherent case, define
\begin{align*}
\beta_t(\Q, w)
=\begin{cases}
\sum\limits_{s \in \bbt_t} \Gamma_t(w_s) \cap M_t  &\text{if } \essinf\limits_{Z \in A_t} \sum_{s\in \bbt_t} w_s^\T \E^{\Q_s}_t[Z_s]=0 ~ \P\mbox{-a.s.}\\
\emptyset &\text{else.}
\end{cases}
\end{align*}
Then from the positive homogeneity of $\rho$, we have that  $\beta_t(\Q, w)\subseteq \alpha_t(\Q, w)$, which yields that
\begin{align*}
&\sum_{s \in \bbt_t} \Big(\big(\E_t^{\Q_s}[-X_s] + \Gamma_t(w_s)\big) \cap M_t\Big) -^{\sbullet} \alpha_t(\Q, w)\\ &\qquad \subseteq \sum_{s \in \bbt_t} \Big(\big(\E_t^{\Q_s}[-X_s] + \Gamma_t(w_s)\big) \cap M_t\Big) -^{\sbullet} \beta_t(\Q, w).
\end{align*}
Note that
\begin{align*}
&\sum_{s \in \bbt_t} \left(\E_t^{\Q_s}[-X_s] + \Gamma_t(w_s)\right) \cap M_t -^{\sbullet}  (\sum_{s\in \bbt_t} \Gamma_t(w_s)) \cap M_t\\ &\qquad = \sum_{s \in \bbt_t} \left(\E_t^{\Q_s}[-X_s] + \Gamma_t(w_s)\right) \cap M_t,\\
&\sum_{s \in \bbt_t} \left(\E_t^{\Q_s}[-X_s] + \Gamma_t(w_s)\right) \cap M_t -^{\sbullet} \emptyset = M_t,
\end{align*}
we obtain that
\begin{align*}
\rho_t(X) \subseteq \bigcap_{(\Q,w) \in \W_t^{\max}} \sum_{s \in \bbt_t} \left(\E_t^{\Q_s}[-X_s] + \Gamma_t(w_s)\right) \cap M_t.
\end{align*}
Finally, from \eqref{eq:dual-process}, it follows that $u_t \in \rho_t(X)$ if and only if
    \[\essinf\limits_{(\Q, w) \in \W_t} \left\{\sum\limits_{s\in \bbt_t} w_s^\T  u_t
                          + \esssup\limits_{Z \in A_t} \sum\limits_{s\in \bbt_t} w_s^\T \E^{\Q_s}_t[-Z_s]
                               + \sum\limits_{s\in \bbt_t} w_s^\T \E^{\Q_s}_t[X_s] \right\}\geq 0.\]
    Assume $u \in \bigcap_{(\Q,w) \in \W_t^{\max}} \sum_{s \in \bbt_t} (\E_t^{\Q_s}[-X_s] + \Gamma_t(w_s)) \cap M_t$.  Then, immediately,
    \[\essinf\limits_{(\Q, w) \in \W_t^{\max}}  \sum\limits_{s\in \bbt_t} w_s^\T  u
                               + \esssup\limits_{Z \in A_t} \sum\limits_{s\in \bbt_t} w_s^\T \E^{\Q_s}_t[-Z_s] + \sum\limits_{s\in \bbt_t} w_s^\T \E^{\Q_s}_t[X_s] \geq 0.\]
    Additionally and clearly,
    \[\essinf\limits_{(\Q, w) \in \W_t \setminus \W_t^{\max}}\left\{\sum\limits_{s\in \bbt_t} w_s^\T  u
                              + \esssup\limits_{Z \in A_t} \sum\limits_{s\in \bbt_t} w_s^\T \E^{\Q_s}_t[-Z_s] + \sum\limits_{s\in \bbt_t} w_s^\T \E^{\Q_s}_t[X_s]\right\}\geq 0.\]
    Hence, the essential infimum taken over the full set of dual variables $\W_t$ must also be bounded from below by $0$,
    which implies that $u \in \rho_t(X)$ and the equivalence is shown. 
\end{proof}

\section{Equivalence of multiportfolio time consistency}\label{sec:mptc}

Though multiportfolio time consistency of set-valued risk measures for processes was presented in~\cite{CH17},
in this work we find that a slight variation of that definition is useful for this work.  In particular, motivated by Theorem~\ref{thm:equiv-primal}, we consider a joint definition of both the risk measure for processes $\rho_t$ and the series of restricted conditional risk measures for vectors $R_s$.
\begin{definition}\label{defn:mptc-joint}
The pair $(\rho_t, (R_s^t)_{s=0}^{t-1})_{t\in\bbt}$ is \textbf{\emph{jointly} multiportfolio time consistent} if:
\begin{enumerate}
\item $\rho$ is multiportfolio time consistent as in Definition~\ref{defn:mptc-process};
\item\label{item:mptc-joint-2} for any times $t < s$, $X_{r} \in L_{r}^{\infty}(\bbr^d)$ and $B_{r} \subseteq L_{r}^{\infty}(\bbr^d)$ for $r \in [t,s)$, and $Z \in \rcal_s^{\infty,d}$:
    \begin{align*}
    &R_r^s(X_r) \subseteq \bigcup_{Y_r \in B_r} R_{r}^s(Y_r) \quad \forall r \in [t,s) \\
    &\Rightarrow \; \rho_t(\sum_{r = t}^{s-1} X_r\ind{r} + Z\indn{\bbt_s}) \subseteq \bigcup_{Y_t \in B_t} \cdots \bigcup_{Y_{s-1} \in B_{s-1}} \rho_t(\sum_{r = t}^{s-1} Y_r\ind{r} + Z\indn{\bbt_s});
    \end{align*}
\item\label{item:mptc-joint-3} for any times $r < t < s$, $X_{r} \in L_{r}^{\infty}(\bbr^d)$ and $B_{r} \subseteq L_{r}^{\infty}(\bbr^d)$:
    \[R_r^s(X_r) \subseteq \bigcup_{Y_r \in B_r} R_r^s(Y_r) \; \Rightarrow \; R_r^t(X_r) \subseteq \bigcup_{Y_r \in B_r} R_r^t(Y_r).\]
\end{enumerate}
\end{definition}

Conceptually, $(\rho,R)$ is jointly multiportfolio time consistent if, in addition to multiportfolio time consistency it satisfies two additional consistency properties. \eqref{item:mptc-joint-2} If two claims are identical after time $s \in \bbt$ and the restricted risk measures (associated with $\rho_s$) on the claims at $r \in [t,s)$ are ordered as multiportfolio time consistency, then the claims as measured by $\rho_t$ at time $t$ must also satisfy the ordering of risks; i.e., the restricted risk measures are consistent in time with the risk measures for processes. And \eqref{item:mptc-joint-3} the ordering of portfolios induced by the restricted risk measure at time $r < t < s \in \bbt$ associated with $\rho_s$ should hold for the restricted risk measure at time $r$ associated with $\rho_t$ as well.  This trivially holds if the restricted risk measure at time $r$ is independent of the risk measure for processes with which it is associated, i.e., $R_r^t = R_r^s$ for $r < t < s$; such a setting is taken in Remark~\ref{rem:mptc-full} below.

We now turn to the final main result of this work -- the equivalence of multiportfolio time consistency for set-valued risk measures for processes and vectors.  This is, again, akin to the results presented for scalar risk measures in~\cite{acciaio2012risk}.  Fundamentally, as the following proof is demonstrated for the definition of multiportfolio time consistency, these results can be used to construct the various equivalent formulations as well; in particular, we highlight the recursive formulation
presented in~\cite{FR12,CH17} and the cocycle condition for penalty functions
 in~\cite{FR12b}.
\begin{theorem}\label{thm:equiv-mptc}
\begin{enumerate}
\item\label{thm:equiv-mptc-1} Consider a jointly multiportfolio time consistent pair of risk measures $(\rho, R)$.
The associated time decomposable risk measure on the optional filtration $\bar R_t: \bar L^{\infty}(\bbr^d) \to \ucal(\bar M_t;\bar M_{t,+})$ defined in~\eqref{eq:primal-barR} is multiportfolio time consistent.
\item\label{thm:equiv-mptc-2} Consider a multiportfolio time consistent, time decomposable risk measure for random vectors on the optional filtration $\bar R$.
The associated risk measure for processes $\rho_t: \rcal_t^{\infty,d} \to \ucal(M_t;M_{t,+})$ and series of risk measures $R_s^t: L_s^{\infty}(\bbr^d) \to \ucal(M_s;M_{s,+})$, $s=0, \dots, t-1$ for $t=0, \dots, T$, defined in~\eqref{eq:primal-rho} and~\eqref{eq:primal-R} are jointly multiportfolio time consistent.
\end{enumerate}
\end{theorem}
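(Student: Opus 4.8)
The plan is to reduce every implication to the structured definition of multiportfolio time consistency for vectors (Definition~\ref{defn:mptc-vector}) and to exploit the time decomposition~\eqref{eq:primal-barR}, under which $\bar R_t$ splits into the independent components $R_r^t$ (for $r<t$) on the slices $\ind{r}$ and $\rho_t\circ\pi_{t,T}$ on the slice $\indn{\bbt_t}$. The first fact I would record is that, because the index set $B$ in Definition~\ref{defn:mptc-vector} is a product $\prod_{r<s}B_r\times B_s$ and these slices live on disjoint parts of $\bar M_t$, the union commutes with the decomposition: $\bigcup_{Y\in B}\bar R_t(Y)=\sum_{r<t}\bigl(\bigcup_{b_r\in B_r}R_r^t(b_r)\bigr)\ind{r}+\bigl(\bigcup_{Y\in B}\rho_t(\pi_{t,T}(Y))\bigr)\indn{\bbt_t}$, and similarly at time $s$. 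Since each of $\bar R_s(X)$, $\bar R_t(X)$ and these unions is then a ``box'' under the slice identification, an inclusion between two such sets is equivalent to the conjunction of the corresponding slice-wise inclusions; this reduces each implication to statements purely about $\rho$ and the $R_r$'s.

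For part~\ref{thm:equiv-mptc-2} I would start from the vector-side hypothesis and, for each of the three clauses of joint multiportfolio time consistency, manufacture a single optional vector $\bar X$ and a product set $\bar B$ to which the multiportfolio time consistency of $\bar R$ (from $s$ down to $t$) applies; projecting the resulting inclusion $\bar R_t(\bar X)\subseteq\bigcup_{\bar Y\in\bar B}\bar R_t(\bar Y)$ onto the appropriate slice then yields the conclusion, using that projection commutes with unions. Concretely, for clause~\eqref{item:mptc-joint-2} I take $\bar X$ with $\pi_{t,T}(\bar X)=\sum_{r=t}^{s-1}X_r\ind{r}+Z\indn{\bbt_s}$ and let $\bar B$ vary only the slices $r\in[t,s)$ over $B_r$ while pinning the slices $r<t$ and the future $\pi_{s,T}=Z$; the slice-wise hypothesis verifies $\bar R_s(\bar X)\subseteq\bigcup_{\bar Y\in\bar B}\bar R_s(\bar Y)$, and reading off the $\indn{\bbt_t}$-slice of the consequence gives exactly~\eqref{item:mptc-joint-2}. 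Clause~\eqref{item:mptc-joint-3} is obtained identically by varying one slice $r<t$ and reading off that slice, and the multiportfolio time consistency of $\rho$ itself follows by varying only the future slice and reading off the $\indn{\bbt_t}$-slice. Throughout, the identifications $\bar R_t(\bar X)_t=\rho_t(\pi_{t,T}(\bar X))$ and $\bar R_t(\bar X)_r=R_r^t(\bar X_r)$ from~\eqref{eq:primal-rho}--\eqref{eq:primal-R} translate each optional-filtration inclusion back into a statement about $\rho$ and $R$.

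For part~\ref{thm:equiv-mptc-1} I would argue in the opposite direction. The slice-wise form of the structured hypothesis at time $s$ yields $R_r^s(X_r)\subseteq\bigcup_{b_r\in B_r}R_r^s(b_r)$ for all $r<s$ together with $\rho_s(\pi_{s,T}(X))\subseteq\bigcup_{b_s\in B_s}\rho_s(b_s)$, and I must establish the slice-wise conclusion at time $t$. The slices $r<t$ are handled directly by clause~\eqref{item:mptc-joint-3}. The $\indn{\bbt_t}$-slice is the crux and I would treat it in two stages: first, one application of the multiportfolio time consistency of $\rho$ (Definition~\ref{defn:mptc-process}) from $s$ to $t$, with the intermediate part $[t,s)$ frozen at $X$ and the future varying over $B_s$, reduces $\rho_t(\pi_{t,T}(X))$ to $\bigcup_{b_s\in B_s}\rho_t(\sum_{r=t}^{s-1}X_r\ind{r}+b_s\indn{\bbt_s})$; second, for each fixed $b_s$, clause~\eqref{item:mptc-joint-2} applied with $Z=b_s$ expands the frozen intermediate part into the iterated union over $\prod_{t\le r<s}B_r$. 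Chaining these two inclusions reproduces $\bigcup_{Y\in B}\rho_t(\pi_{t,T}(Y))$, which is the required $\indn{\bbt_t}$-slice.

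I expect the main obstacle to be precisely this two-stage argument for the $\indn{\bbt_t}$-slice in part~\ref{thm:equiv-mptc-1}, since it is where the process-level consistency of $\rho$ and the vector-level consistency encoded in clause~\eqref{item:mptc-joint-2} must be interleaved, and where the order of the nested unions has to be arranged so that freezing the future first and then expanding the intermediate slices recovers the full product union over $B$. The remaining bookkeeping — that the union commutes with the time decomposition and that projecting onto a slice commutes with the union — is routine but must be stated carefully, as it is exactly what legitimizes the slice-wise reductions used throughout both directions.
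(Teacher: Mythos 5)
Your proposal is correct and follows essentially the same route as the paper: reduce both directions to slice-wise inclusions via the time decomposition~\eqref{eq:primal-barR} and the product structure of $B$, use the three clauses of joint consistency to assemble the vector-level inclusion in part~\ref{thm:equiv-mptc-1}, and in part~\ref{thm:equiv-mptc-2} pin all but the relevant slices and project. The only (immaterial) difference is in part~\ref{thm:equiv-mptc-1}, where the paper first expands the intermediate slices via clause~\eqref{item:mptc-joint-2} and then the future via the multiportfolio time consistency of $\rho$, whereas you interleave the two steps in the opposite order; both orderings chain to the same nested union.
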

\begin{proof}
Throughout we will fix times $t < s$.
\begin{enumerate}
\item Let $B := \{(b_0,b_1,\dots,b_{s-1},b_s) \; : \; b_r \in B_r, r < s, \; b_s \in B_s\}$ for any arbitrary sequence of sets $B_r \subseteq L_r^{\infty}(\bbr^d)$ for $r < s$ and $B_s \subseteq \rcal^{\infty,d}_s$ and $X \in \bar L^{\infty}(\bbr^d)$ such that:
\[\bar R_s(X) \subseteq \bigcup_{Y \in B} \bar R_s(Y).\]
    By construction, this implies $R_r^s(X_r) \subseteq \bigcup_{Y_r \in B_r} R_r^s(Y_r)$ for every $r < s$ and $\rho_s(X\indn{\bbt_s}) \subseteq \bigcup_{Y_s \in B_s} \rho_s(Y_s\indn{\bbt_s})$.
    Immediately, from the third condition of joint multiportfolio time consistency, we know $R_r^t(X_r) \subseteq \bigcup_{Y_r \in B_r} R_r^t(Y_r)$ for every $r < t$,
    it remains to show that $\rho_t(X\indn{\bbt_t}) \subseteq \bigcup_{Y \in B} \rho_t(Y\indn{\bbt_t})$:
\begin{align*}
\rho_t(X\indn{\bbt_t}) &\subseteq \bigcup_{Y_t \in B_t} \cdots \bigcup_{Y_{s-1} \in B_{s-1}} \rho_t(\sum_{r = t}^{s-1} Y_r\ind{r} + X\indn{\bbt_s}) \\
&\subseteq \bigcup_{Y_t \in B_t} \cdots \bigcup_{Y_{s-1} \in B_{s-1}} \bigcup_{Y_s \in B_s} \rho_t(\sum_{r = t}^{s-1} Y_r\ind{r} + Y_s\indn{\bbt_s}) = \bigcup_{Y \in B} \rho_t(Y\indn{\bbt_t}),
\end{align*}
where the first inclusion follows from the second condition of joint multiportfolio time consistency and the second inclusion from the first condition of joint multiportfolio time consistency.  Therefore, by construction of $\bar R_t$ it immediately follows that $\bar R_t(X) \subseteq \bigcup_{Y \in B} \bar R_t(Y)$.


\item Note, first, since $(\bar R_t)_{t\in\bbt}$ is time decomposable, from Theorem~\ref{thm:equiv-primal}, we have that
    \[\bar R_t(X) = \sum_{r = 0}^{t-1} R_r^t(X_r)\ind{r} + \rho_t(\pi_{t,T}(X))\indn{\bbt_t}\]
    for any time $t \in \bbt$. For this proof, we will only directly demonstrate the first condition of joint multiportfolio time consistency; the latter two properties follow from identical arguments.
    Fix $X \in \rcal^{\infty,d}$ and $B \subseteq \rcal^{\infty,d}$ such that
    \begin{align*}
    \rho_s(\pi_{s,T}(X)) \subseteq  \bigcup_{Y \in B} \rho_s(\pi_{s,T}(Y)).
    \end{align*}
    This implies, by the decomposition of $\bar R_s$, that
    \[\bar R_s(Z\ind{[t,s)} + X\indn{\bbt_s}) \subseteq \bigcup_{Y \in B} \bar R_s(Z\ind{[t,s)} + Y\indn{\bbt_s})\]
    for any $Z \in \rcal^{\infty,d}$.
    Utilizing multiportfolio time consistency of $\bar R$, we recover
    \[\bar R_t(Z\ind{[t,s)} + X\indn{\bbt_s}) \subseteq \bigcup_{Y \in B} \bar R_t(Z\ind{[t,s)} + Y\indn{\bbt_s})\]
    for any $Z \in \rcal^{\infty,d}$.
    By the decomposition of $\bar R_t$, we immediately conclude
    \[\rho_t(Z\ind{[t,s)} + X\indn{\bbt_s}) \subseteq \bigcup_{Y \in B} \rho_t(Z\ind{[t,s)} + Y\indn{\bbt_s})\]
    for any $Z \in \rcal^{\infty,d}$, i.e., $\rho$ is multiportfolio time consistent.
 \end{enumerate}
\end{proof}

From Theorems \ref{thm:equiv-primal} and \ref{thm:equiv-mptc} and using results in~\cite{FR12b,FR15-supermtg}, one can steadily deduce some equivalent characterizations of multiportfolio time consistency for set-valued dynamic risk measures for processes, such as the cocycle condition on the sum of minimal penalty functions and supermartingale relation. 

\begin{remark}\label{rem:mptc-full}
Consider the setting in which all assets are eligible, i.e., assume that $M = \bbr^d$.
Let $\rho$ be a multiportfolio time consistent risk measure for processes (see Definition~\ref{defn:mptc-process})
and define $R_s: L_s^{\infty}(\bbr^d) \to \ucal(L_s^{\infty}(\bbr^d);L_s^{\infty}(\bbr^d_+))$ by
\begin{align*}
R_s(X_s) := -X_s + L_s^{\infty}(\bbr^d_+), \quad X_s \in L_s^{\infty}(\bbr^d).
\end{align*}
By monotonicity, $(\rho, R)$ is jointly multiportfolio time consistent. Therefore, by Theorem~\ref{thm:equiv-mptc}, $\rho$ is multiportfolio time consistent if and only if the associated risk measure for vectors $\bar R$ is multiportfolio time consistent.
\end{remark}

We conclude this discussion by remarking that, in this discrete time setting, it is sufficient to consider multiportfolio time consistency defined for single time steps only through a sequential application of the recursive definition (i.e., setting $s = t+1$).  This is discussed in, e.g.,~\cite{FR12b} as well.

\section{Conclusion}\label{sec:conclusion}
In this work we demonstrated the equivalence between set-valued risk measures for processes with those for random vectors on the optional filtration.  Such considerations allow for the application of results from one stream of literature to the other.  In particular, we highlight that the set-valued risk measures for vectors is a more mature field with more results on dual representations and equivalent forms for multiportfolio time consistency.  Herein, we used this equivalence to prove a new dual representation for risk measures for processes.  We also highlight that the equivalence of these risk measures allows for the generalization of, e.g., the cocycle condition on the sum of minimal penalty functions or the supermartingale relation for multiportfolio time consistency for risk measures for vectors to risk measures for processes.  We caution the reader that such extensions, while following the results of this work, will require the use of a series of conditional risk measures for vectors on the original filtration in addition to the risk measure for processes which is \emph{not} necessary if considering scalar risk measures as presented in~\cite{acciaio2012risk}.

\section*{Acknowledgements}
The authors are very grateful to the Editors and the anonymous referees for their comments and suggestions which led to the present greatly improved version of the manuscript.
%

\bibliographystyle{plain}
\bibliography{biblio}

\end{document}